\theoremstyle{plain}
\newcommand{\perm}{\operatorname{perm}}
\DeclareMathOperator{\lcm}{lcm}
\newenvironment{claiminproof}[1]{\medskip\par\noindent\underline{Claim:}\space#1}{}
\newenvironment{claimproof}[1]{\begin{quote}\par\noindent\emph{Proof of the Claim:}\space#1}{[\emph{End, Proof of the Claim}]\end{quote}}
\def\dotminus{\mathbin{\ooalign{\hss\raise1ex\hbox{.}\hss\cr
  \mathsurround=0pt$-$}}} 
\providecommand*{\shuffle}{%
  \mathbin{\mathpalette\shuffle@{}}%
}
\newcommand*{\shuffle@}[2]{%
  \sbox0{$#1\vcenter{}$}%
  \kern .15\ht0 
  \rlap{\vrule height .25\ht0 depth 0pt width 2.5\ht0}%
  \raise.1\ht0\hbox to 2.5\ht0{%
    \vrule height 1.75\ht0 depth -.1\ht0 width .17\ht0 %
    \hfill
    \vrule height 1.75\ht0 depth -.1\ht0 width .17\ht0 %
    \hfill
    \vrule height 1.75\ht0 depth -.1\ht0 width .17\ht0 %
  }%
  \kern .15\ht0 
}
\newcommand{\itshuffle}{\shuffle,*}
\newcommand{\plusshuffle}{\shuffle,+}
\DeclareFontFamily{U}{bigshuffle}{}
\DeclareFontShape{U}{bigshuffle}{m}{n}{
  <5-8> s*[1.7] shuffle7
  <8->  s*[1.7] shuffle10
}{}
\DeclareSymbolFont{BigShuffle}{U}{bigshuffle}{m}{n}
\DeclareMathSymbol\bigshuffle{\mathop}{BigShuffle}{"001}
\DeclareMathSymbol\bigcshuffle{\mathop}{BigShuffle}{"002}
\newcommand{\NP}{\textsf{NP}}
\begin{document}
%
\title{Regularity Conditions for Iterated Shuffle on Commutative Regular Languages}
\titlerunning{Regularity of Iterated Shuffle on Commutative Regular Languages}

%
%
\author{Stefan Hoffmann\orcidID{0000-0002-7866-075X}}
\authorrunning{S. Hoffmann}
%
\institute{Informatikwissenschaften, FB IV, 
  Universit\"at Trier,  Universitätsring 15, 54296~Trier, Germany, 
  \email{hoffmanns@informatik.uni-trier.de}}
\maketitle              
\begin{abstract}
 We identify a subclass
 of the regular commutative languages that is closed
 under the iterated shuffle, or shuffle closure.
 In particular, it is regularity-preserving on this subclass. This subclass contains 
 the commutative group languages
 and, for every alphabet $\Sigma$, the class $\textbf{Com}^+(\Sigma^*)$
 given by the ordered variety $\textbf{Com}^+$.
 Then, we state a simple characterization
 when the iterated shuffle on finite commutative languages
 gives a regular language again and state partial results for aperiodic
 commutative languages.
 We also show that the  aperiodic, or star-free, commutative languages
 and the commutative group languages are closed under projection.

 \keywords{finite automata \and commutative languages  \and closure properties \and 
  iterated shuffle \and shuffle closure \and regularity-preserving operations} 
\end{abstract}

\section{Introduction}



The shuffle and iterated shuffle have been introduced and studied to understand, or specify,
the semantics of parallel programs. This was undertaken, as it appears
to be, independently by Campbell and Habermann~\cite{CamHab74}, by Mazurkiewicz~\cite{DBLP:conf/mfcs/Marzurkiewicz75}
and by Shaw~\cite{Shaw78zbMATH03592960}. They introduced \emph{flow expressions}, 
which allow for sequential operators (catenation and iterated catenation) as well
as for parallel operators (shuffle and iterated shuffle)
to specify sequential and parallel execution traces.

For illustration, let us reproduce the following very simple Reader-Writer Problem from~\cite{Shaw78zbMATH03592960}, as an example involving the iterated shuffle. In this problem, a set of cyclic
processes may be in read-mode, but only one process at a time is allowed to be in write-mode, and read and write operations may not proceed concurrently.
Additionally, we impose that the processes have to come to an end, in~\cite{Shaw78zbMATH03592960} they are allowed
to run indefinitely.
This constraint could be specified, using our notation, by
\[
 ((\operatorname{StartRead\cdot Read \cdot EndRead})^{\itshuffle} \cup \operatorname{Write})^{*},
\]
where ``$\itshuffle$'' denotes the iterated shuffle
and ``$*$'' the Kleene star.

Let us note that in~\cite{Shaw78zbMATH03592960} additional lock
and signal instructions were allowed. Also in~\cite{DBLP:journals/cl/Riddle79} similar expressions for process modeling were investigated, allowing the binary shuffle operation, but without inclusion of the iterated shuffle.

The shuffle operation as a binary operation, but not the iterated shuffle,
is regularity-preserving on all regular languages. 
However, already the iterated shuffle of very simple languages
can give non-regular languages.
Hence, it is interesting to know, and to identify, 
quite rich classes for which this operation
is regularity-preserving. Here, we give such
a class which includes the commutative group 
languages and the languages described by the
positive variety $\mathbf{Com}^+$.
Additionally, we give a characterization for the regularity
of the iterated shuffle when applied to finite 
commutative languages and state some partial results for aperiodic (or star-free) commutative languages.

We mention that subregular language classes closed under the binary shuffle operation
were investigated previously~\cite{DBLP:journals/tcs/Perrot78,DBLP:journals/actaC/AlmeidaEP17,DBLP:journals/fuin/CastiglioneR12,DBLP:conf/lata/Restivo15,DBLP:journals/iandc/BerstelBCPR10,DBLP:journals/tcs/GomezP04}.

We also show that the commutative
star-free languages and the 
commutative group languages
are closed under projections.
For further connections on regularity conditions
and closure properties, in particular for the star-free
languages, see the recent survey~\cite{DBLP:conf/lata/Pin20}.

\section{Preliminaries and Definitions}
\label{sec::preliminaries}

\subsection{General Notions} 


Let $\Sigma$ be a finite set of symbols
 called an \emph{alphabet}. The set $\Sigma^{\ast}$ denotes
the set of all finite sequences, i.e., of all \emph{words}. The finite sequence of length zero,
or the \emph{empty word}, is denoted by $\varepsilon$. For a given word we denote by $|w|$
its length, and for $a \in \Sigma$ by $|w|_a$ the number of occurrences of the symbol $a$
in $w$. A \emph{language} is a subset of $\Sigma^*$.
If $L \subseteq \Sigma^*$ and $u \in \Sigma^*$,
then the \emph{quotients} are the languages $u^{-1}L =  \{ v \in \Sigma^* \mid uv \in L \}$ and $Lu^{-1} = \{ v \in \Sigma^* \mid vu \in L \}$.

We assume the reader
to have some basic knowledge in formal language
theory, as contained, e.g., in~\cite{HopUll79,DBLP:books/daglib/0088160}. For instance, we make use of regular expressions to describe
languages.

 Let $\Gamma \subseteq \Sigma$. Then, we define
 \emph{projection homomorphisms} $\pi_{\Gamma} : \Sigma^* \to \Gamma^*$ onto $\Gamma^*$
 by
 $\pi_{\Gamma}(x) = x$ for $x \in \Gamma$
 and $\pi_{\Gamma}(x) = \varepsilon$
 for $x \notin \Gamma$.
 

By $\mathbb N_0 = \{ 0,1,2,\ldots \}$,
we denote the set of natural numbers, including zero. 
We will also consider the ordered set $\mathbb N_0 \cup \{\infty\}$ with $\mathbb N_0$
having the usual order and setting $n < \infty$ for any $n \in \mathbb N_0$.

A quintuple $\mathcal A = (\Sigma, Q, \delta, q_0, F)$
is a finite \emph{(incomplete) deterministic automaton}, where
 $\delta : Q \times \Sigma \to S$ is a partial transition function, $Q$ a finite set of states, $q_0 \in S$
the start state and $F \subseteq Q$ the set of final states. 
The automaton $\mathcal A$ is said to be \emph{complete} if $\delta$
is a total function.
The transition function $\delta : Q \times \Sigma \to S$
could be extended to a transition function on words $\delta^{\ast} : Q \times \Sigma^{\ast} \to S$
by setting $\delta^{\ast}(q, \varepsilon) = q$ and $\delta^{\ast}(q, wa) := \delta(\delta^{\ast}(q, w), a)$
for $q \in Q$, $a \in \Sigma$ and $w \in \Sigma^{\ast}$. In the remainder, we drop
the distinction between both functions and will also denote this extension by $\delta$.
The language \emph{recognized} by an automaton $\mathcal A = (\Sigma, Q, \delta, q_0, F)$ is
$
 L(\mathcal A) = \{ w \in \Sigma^{\ast} \mid \delta(q_0, w) \in F \}.
$
A language $L \subseteq \Sigma^{\ast}$ is called \emph{regular} if $L = L(\mathcal A)$
for some finite automaton~$\mathcal A$.

\begin{toappendix}
The \emph{Nerode right-congruence} 
with respect to $L \subseteq \Sigma^*$ is defined, for $u,v \in \Sigma^*$, by $u \equiv_L v$ if and only if 
\[
 \forall x \in \Sigma^* : ux \in L \leftrightarrow vx \in L.
\]
The equivalence class, for $w \in \Sigma^{\ast}$,
is denoted by $[w]_{\equiv_L} = \{ x \in \Sigma^{\ast} \mid x \equiv_L w \}$.
A language is regular if and only if the above right-congruence has finite index.
\end{toappendix}

The following classic result will also be needed later.

\begin{theorem}[Generalized Chinese Remainder Theorem~\cite{schmid58}]
\label{thm:CRT}
 The system of linear congruences
 \[
  x \equiv r_i \pmod{m_i} \quad (i=1,2,\ldots,k)
 \] 
 has integral solutions $x$ if and only if $\gcd(m_i,m_j)$ divides $(r_i - r_j)$
 for all pairs $i \ne j$ and all solutions are congruent modulo $\lcm(m_1, \ldots, m_k)$.
\end{theorem}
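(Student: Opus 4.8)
The plan is to split the biconditional into an easy necessity direction and a harder sufficiency direction, and to fold the uniqueness assertion into the same induction. For \textbf{necessity}, suppose $x$ solves all the congruences, fix a pair $i \neq j$, and set $d = \gcd(m_i, m_j)$. Then $m_i \mid (x - r_i)$ and $m_j \mid (x - r_j)$, so $d$ divides both $x - r_i$ and $x - r_j$; subtracting gives $d \mid (r_i - r_j)$. This handles one direction with essentially no computation.

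For \textbf{sufficiency together with uniqueness} I would argue by induction on $k$, reducing the general system to repeated use of the two-congruence case. For $k = 2$ I would write a candidate solution as $x = r_1 + m_1 t$ and require $m_1 t \equiv r_2 - r_1 \pmod{m_2}$. By the elementary theory of linear congruences this is solvable in $t$ exactly when $\gcd(m_1, m_2) \mid (r_2 - r_1)$ --- precisely the hypothesis --- and the solutions $t$ form a single residue class modulo $m_2 / \gcd(m_1, m_2)$, so $x$ ranges over one residue class modulo $m_1 m_2 / \gcd(m_1, m_2) = \lcm(m_1, m_2)$. This yields existence and the ``unique mod lcm'' statement simultaneously in the base case.

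In the inductive step I would assume the first $k-1$ congruences have already been merged into one equivalent congruence $x \equiv R \pmod{M}$ with $M = \lcm(m_1, \ldots, m_{k-1})$, and then combine it with $x \equiv r_k \pmod{m_k}$ via the $k = 2$ case. The catch --- and this is \textbf{the main obstacle} --- is that applying the base case needs the compatibility condition $\gcd(M, m_k) \mid (R - r_k)$, which is not one of the given pairwise hypotheses and must be derived from them. I would resolve it as follows: for each $i < k$, since $\gcd(m_i, m_k)$ divides $m_i$ and $m_i \mid (R - r_i)$, we have $R \equiv r_i \pmod{\gcd(m_i, m_k)}$; combined with the hypothesis $\gcd(m_i, m_k) \mid (r_i - r_k)$ this gives $\gcd(m_i, m_k) \mid (R - r_k)$ for every $i < k$. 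As $a \mid c$ and $b \mid c$ imply $\lcm(a, b) \mid c$, it follows that $\lcm_{i < k} \gcd(m_i, m_k)$ divides $R - r_k$. The step is then closed using the gcd--lcm distributive identity
\[
 \gcd\bigl(\lcm(m_1, \ldots, m_{k-1}),\, m_k\bigr) = \lcm_{i < k} \gcd(m_i, m_k),
\]
which I would verify prime by prime, since at each prime $p$ both sides reduce to the equality $\min(\max_i a_i, b) = \max_i \min(a_i, b)$ of $p$-adic valuations. This yields exactly $\gcd(M, m_k) \mid (R - r_k)$, so the base case applies and merges the two into a single congruence modulo $\lcm(M, m_k) = \lcm(m_1, \ldots, m_k)$. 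Uniqueness propagates automatically, because combining two single-residue-class congruences again produces a single residue class whose modulus is the lcm.
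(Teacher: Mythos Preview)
Your argument is correct and complete: the necessity direction is immediate, and your inductive sufficiency proof with the key distributive identity
\[
 \gcd\bigl(\lcm(m_1,\ldots,m_{k-1}),\, m_k\bigr) \;=\; \lcm_{i<k}\,\gcd(m_i,m_k)
\]
is exactly the right lever (and your prime-by-prime verification of it via $\min(\max_i a_i,b)=\max_i\min(a_i,b)$ is valid). The uniqueness modulo the lcm is handled cleanly by threading it through the same induction.

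As for comparison with the paper: there is nothing to compare against. The paper does not prove Theorem~\ref{thm:CRT} at all; it is quoted as a classical result with a reference to the literature and then used as a black box (for instance in the proof of Lemma~\ref{lem:no_gamma} and in the proof sketch of Theorem~\ref{thm:it_shuffle_reg_preserving}). Your write-up therefore supplies a self-contained proof where the paper simply cites one.
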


\subsection{Commutative Languages and the Shuffle Operation}
\label{subsec:com_lang_shuffle}


For a given word $w \in \Sigma^{\ast}$, we define $\perm(w) := \{ u \in \Sigma^{\ast} \mid \forall a \in \Sigma : |u|_a = |w|_a \}$.
If $L \subseteq \Sigma^{\ast}$, then we set $\perm(L) := \bigcup_{w\in L} \perm(w)$.
A language is called \emph{commutative},
if $\perm(L) = L$. Let $\Sigma = \{a_1, \ldots, a_k\}$.
The \emph{Parikh mapping} is $\psi : \Sigma^* \to \mathbb N_0^k$
given by $\psi(u) = (|u|_{a_1}, \ldots, |u|_{a_k})$ for $u \in \Sigma^*$.
We have $\perm(L) = \psi^{-1}(\psi(L))$.

The \emph{shuffle operation}, denoted by $\shuffle$, is defined by
%
 \begin{multline*}
    u \shuffle v  = \{ w \in \Sigma^*  \mid  w = x_1 y_1 x_2 y_2 \cdots x_n y_n 
    \mbox{ for some words } \\ x_1, \ldots, x_n, y_1, \ldots, y_n \in \Sigma^*
    \mbox{ such that } u = x_1 x_2 \cdots x_n \mbox{ and } v = y_1 y_2 \cdots y_n \},
 \end{multline*}
 for $u,v \in \Sigma^{\ast}$ and 
  $L_1 \shuffle L_2  := \bigcup_{x \in L_1, y \in L_2} (x \shuffle y)$ for $L_1, L_2 \subseteq \Sigma^{\ast}$.

 In writing formulas
 without brackets, we suppose that the shuffle operation binds stronger than the set operations,
 and the concatenation operator has the strongest binding. 
 
  If $L_1, \ldots, L_n \subseteq \Sigma^*$, we set $\bigshuffle_{i=1}^n L_i = L_1 \shuffle \ldots \shuffle L_n$.
 The \emph{iterated shuffle} of $L \subseteq \Sigma^*$ 
 is $L^{\shuffle,*} = \bigcup_{n \ge 0} \bigshuffle_{i=1}^n L$.
 We also set $L^{\plusshuffle} = \bigcup_{n \ge 1} \bigshuffle_{i=1}^n L$.
 
 \begin{theorem}[Fernau et al. \cite{FerPSV2017}] 
 \label{thm:shuffle_properties}
  Let $U,V,W \subseteq \Sigma^*$. Then,
  \begin{enumerate}
  \item $U \shuffle V = V \shuffle U$ (commutative law);
  \item $(U \shuffle V) \shuffle W = U \shuffle (V \shuffle W)$ (associative law);
  \item $U \shuffle (V \cup W)  
   = (U \shuffle V) \cup (U \shuffle W)$ (distributive over union);
  \item $(U^{\shuffle,*})^{\shuffle,*} = U^{\shuffle,*}$;
  \item $(U\cup V)^{\shuffle,*} = U^{\shuffle,*} \shuffle V^{\shuffle,*}$;
  \item $(U \shuffle V^{\shuffle,*})^{\shuffle,*} = (U \shuffle (U \cup V)^{\shuffle,*}) \cup \{\varepsilon\}$.
  \end{enumerate}
 \end{theorem}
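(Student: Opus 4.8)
The plan is to verify the six identities directly from the combinatorial definition of shuffle, proceeding from the easy algebraic laws to the iterated-shuffle identities, where commutativity and associativity do the real work. Properties~(1) and~(3) I would dispatch immediately: commutativity~(1) follows from the symmetry of the defining condition $w = x_1 y_1 \cdots x_n y_n$ under swapping the factorizations of $u$ and $v$ (padding with $\varepsilon$ when the two factor lists have different lengths), and distributivity over union~(3) is immediate from $L_1 \shuffle L_2 = \bigcup_{x \in L_1, y \in L_2}(x \shuffle y)$, since both sides collect exactly the pairwise shuffles $x \shuffle y$ with $x \in U$ and $y \in V \cup W$.

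For associativity~(2) and everything downstream I would first install a position-coloring description of shuffle that is manifestly symmetric and generalizes painlessly to the $n$-ary shuffle $\bigshuffle_{i=1}^n L_i$: namely, $w \in u_1 \shuffle \cdots \shuffle u_n$ iff the positions of $w$ admit a coloring by $\{1,\ldots,n\}$ such that, for each $i$, deleting all positions not colored $i$ leaves exactly $u_i$. Associativity~(2) then follows because $(U \shuffle V)\shuffle W$ and $U \shuffle(V \shuffle W)$ both describe the $3$-colorings of $w$ realizing factors from $U$, $V$, $W$, so I would phrase it as a routine merging and splitting of colorings. Idempotence~(4) is then a double inclusion: $U^{\shuffle,*}\subseteq (U^{\shuffle,*})^{\shuffle,*}$ because $\varepsilon$ lies in every iterated shuffle (the $n=0$ term) and $X\subseteq X^{\shuffle,*}$ (the $n=1$ term), while the reverse inclusion uses~(2) to flatten a shuffle of finitely many $U^{\shuffle,*}$-words, each itself a shuffle of finitely many $U$-words, into a single shuffle of finitely many $U$-words.

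Property~(5) I would prove in the same coloring picture. A word in $(U\cup V)^{\shuffle,*}$ is an interleaving of finitely many factors, each drawn from $U$ or from $V$; recoloring those factors into two groups by origin and invoking~(1) and~(2) regroups it as a shuffle of a $U^{\shuffle,*}$-word with a $V^{\shuffle,*}$-word, and conversely any such shuffle is visibly an interleaving of $U$- and $V$-factors, hence lies in $(U\cup V)^{\shuffle,*}$.

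The main obstacle is property~(6), and the strategy is to reduce both sides to the common normal form $\bigl(U^{\plusshuffle}\shuffle V^{\shuffle,*}\bigr)\cup\{\varepsilon\}$, which is also where the isolated $\{\varepsilon\}$ on the right gets explained. On the left, a nonempty element of $(U\shuffle V^{\shuffle,*})^{\shuffle,*}$ is a shuffle of $n\ge 1$ blocks $u_i\shuffle w_i$ with $u_i\in U$ and $w_i\in V^{\shuffle,*}$; by~(1), (2) and~(4) this rearranges to $(u_1\shuffle\cdots\shuffle u_n)\shuffle(w_1\shuffle\cdots\shuffle w_n)$ with the second factor still in $V^{\shuffle,*}$, i.e.\ an element of $U^{\plusshuffle}\shuffle V^{\shuffle,*}$; the reverse inclusion comes from pairing a given $V^{\shuffle,*}$-word with one $U$-block and the empty word (which lies in $V^{\shuffle,*}$) with the rest. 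On the right, I would first rewrite $(U\cup V)^{\shuffle,*}=U^{\shuffle,*}\shuffle V^{\shuffle,*}$ using~(5), so that $U\shuffle(U\cup V)^{\shuffle,*}=U\shuffle U^{\shuffle,*}\shuffle V^{\shuffle,*}=U^{\plusshuffle}\shuffle V^{\shuffle,*}$, matching the left side's normal form exactly. The care point, and the reason the claim carries an explicit union with $\{\varepsilon\}$, is the empty word: when $\varepsilon\notin U$ the set $U^{\plusshuffle}\shuffle V^{\shuffle,*}$ need not contain $\varepsilon$, yet the left-hand iterated shuffle always does via its $n=0$ term, so the two sides agree only after adjoining $\{\varepsilon\}$ on the right. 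I would finish by checking this $\varepsilon$-bookkeeping explicitly rather than absorbing it into the algebraic rewriting.
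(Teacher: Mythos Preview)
The paper does not actually prove this theorem: it is quoted verbatim from Fernau et al.~\cite{FerPSV2017} and used as a black box, so there is no in-paper proof to compare against. Your proposal, by contrast, supplies a self-contained verification, and it is correct. The position-coloring reformulation of the $n$-ary shuffle is the standard clean way to get associativity~(2) and then properties~(4) and~(5) by a flatten/regroup argument; your reduction of both sides of~(6) to the common normal form $\bigl(U^{\plusshuffle}\shuffle V^{\shuffle,*}\bigr)\cup\{\varepsilon\}$ is the right move, and your handling of the $\varepsilon$-bookkeeping (the $n=0$ term on the left versus the explicit $\{\varepsilon\}$ on the right) is accurate.

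One minor point worth making explicit in a written-out version: when you pass from ``$w$ is a shuffle of $n$ words $z_i\in U\shuffle V^{\shuffle,*}$'' to ``$w$ is a shuffle of $u_1,\ldots,u_n,w_1,\ldots,w_n$'', you are implicitly composing the outer $n$-coloring of $w$ with the inner $2$-coloring of each $z_i$ to obtain a $2n$-coloring of $w$. This is exactly what your coloring framework is designed to deliver, but it is the one step where associativity is doing nontrivial work on nested shuffles rather than on a flat product, so it deserves a sentence.
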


The next
result is taken from~\cite{FerPSV2017} and gives 
equations like $\perm(UV) = \perm(U) \shuffle \perm(V)$
or $\perm(U^*) = \perm(U)^{\shuffle,*}$ for $U,V \subseteq \Sigma^*$.
A \emph{semiring} is an algebraic structure $(S, +, \cdot, 0, 1)$
such that $(S, +, 0)$ forms a commutative monoid, $(S, \cdot, 1)$ is a monoid
and we have $a\cdot (b + c) = a\cdot b + a\cdot c$, $(b+c)\cdot a = b\cdot a + c\cdot a$
and $0 \cdot a = a \cdot 0 = 0$.

\begin{theorem}[Fernau et al.  \cite{FerPSV2017}]
\label{thm:perm_semiring_hom}
 $\perm : \mathcal P(\Sigma^*) \to \mathcal P(\Sigma^*)$
 is a semiring morphism from the semiring
 $(\mathcal P(\Sigma^*), \cup, \cdot, \emptyset, \{\varepsilon\})$, that also respects the iterated catenation resp. iterated shuffle operation,
 to the semiring $(\mathcal P(\Sigma^*), \cup, \shuffle, \emptyset, \{\varepsilon\})$.
\end{theorem}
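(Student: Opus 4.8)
The plan is to verify the defining conditions of a semiring morphism, isolating the one substantial step. The central tool is the identity $\perm(L) = \psi^{-1}(\psi(L))$ recorded above, together with the fact that the Parikh map is a monoid homomorphism, so that $\psi(uv) = \psi(u) + \psi(v)$ for all $u, v \in \Sigma^*$; this lets me translate every assertion about $\perm$, concatenation, and shuffle into an additive statement about Parikh vectors in $\mathbb N_0^k$, where both products collapse to vector addition. The additive and unit conditions are then immediate from the definition $\perm(L) = \bigcup_{w \in L} \perm(w)$, which shows $\perm$ preserves arbitrary unions, in particular $\perm(U \cup V) = \perm(U) \cup \perm(V)$, and directly yields $\perm(\emptyset) = \emptyset$ and $\perm(\{\varepsilon\}) = \{\varepsilon\}$.

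The multiplicative condition $\perm(UV) = \perm(U) \shuffle \perm(V)$ carries the real content, and I would prove it by reducing membership on each side to the same arithmetic condition, namely that $\psi(w) = \psi(u) + \psi(v)$ for some $u \in U$ and $v \in V$. For the left side this is clear, since $w \in \perm(UV)$ means $\psi(w) = \psi(uv)$ for some such $u, v$. For the right side, any word in $u' \shuffle v'$ has Parikh vector $\psi(u') + \psi(v')$, while $u' \in \perm(U)$ and $v' \in \perm(V)$ mean $\psi(u') = \psi(u)$ and $\psi(v') = \psi(v)$ for suitable $u \in U, v \in V$, so membership again forces the same equation; the backward inclusion follows at once. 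For the forward inclusion I must produce an explicit interleaving: given $\psi(w) = \psi(u) + \psi(v)$, so that $|w|_a = |u|_a + |v|_a$ for every $a \in \Sigma$, I select a subsequence $u'$ of $w$ using exactly $|u|_a$ occurrences of each letter $a$ and let $v'$ be the complementary subsequence; then $\psi(u') = \psi(u)$ and $\psi(v') = \psi(v)$, giving $u' \in \perm(U)$, $v' \in \perm(V)$ with $w \in u' \shuffle v'$. This letter-by-letter splitting is the only genuinely nontrivial point.

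Finally, for the iteration I would show by induction on $n$, using the multiplicative condition just established, that $\perm(U^n) = \bigshuffle_{i=1}^{n} \perm(U)$ for all $n \ge 0$, with the base case supplied by the unit condition. Since $\perm$ preserves arbitrary unions, it follows that $\perm(U^*) = \perm\left( \bigcup_{n \ge 0} U^n \right) = \bigcup_{n \ge 0} \bigshuffle_{i=1}^{n} \perm(U) = \perm(U)^{\shuffle,*}$, which is exactly the claim that $\perm$ carries iterated catenation to iterated shuffle. The main obstacle is the decomposition in the forward multiplicative inclusion; once it is in hand, the remaining conditions are bookkeeping.
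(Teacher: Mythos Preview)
Your proof is correct, but note that the paper does not actually prove this theorem: it is quoted as a known result from Fernau et al.~\cite{FerPSV2017} and used as a black box throughout. Your argument is the natural direct verification---reducing both concatenation and shuffle to addition of Parikh vectors, exhibiting the explicit subsequence split for the forward inclusion, and deriving the iteration statement by induction on the exponent together with preservation of arbitrary unions---and is presumably close to what one would find in the cited reference.
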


 The class of commutative languages obeys the following closure properties.

\begin{theoremrep}[\cite{Pin86,DBLP:reference/hfl/Pin97,DBLP:conf/cai/Hoffmann19,Hoffmann20}]
\label{thm::comm_lang_closure} 
 The class of commutative languages is closed under union, intersection, complement,
 projections, the shuffle operation and the iterated shuffle.
\end{theoremrep}
\begin{proof}
 In \cite{Pin86}, closure under the boolean operations and shuffle is shown.
 For closure under projections, note that
 for any $L \subseteq \Sigma^*$ and $\Gamma \subseteq \Sigma$,
 we have $\perm(\pi_{\Gamma}(L)) = \pi_{\Gamma}(\perm(L))$.
 For iterated shuffle, if $u \in L^{\itshuffle}$,
 then $u \in L^{\shuffle,n}$ for some $n \ge 0$.
 Hence, as $L^{\shuffle,n}$ is commutative, $\perm(u) \subseteq L^{\shuffle,n}$.
 So, $L^{\itshuffle}$ is also commutative.~\qed
 \end{proof}

\subsection{Aperiodic and Group Languages}
\label{subsec:aperiodic_group}

The class of aperiodic
languages was introduced in~\cite{McNaughton71} and admits a wealth of other
characterizations.

\begin{definition}
\label{def:aperiodic_aut}
 An automaton $\mathcal A = (\Sigma, Q, \delta, q_0, F)$
 is \emph{aperiodic}, if there exists $n \ge 0$
 such that, for all states $q \in Q$ and any
 word $w \in \Sigma^*$, we have
 $
  \delta(q, w^n) = \delta(q, w^{n+1}).
 $
\end{definition}

We define the class of aperiodic languages.

\begin{definition}
\label{def:aperiodic_lang}
 A regular language is called \emph{aperiodic} if
 there exists an aperiodic automaton recognizing it.
\end{definition}

The class of \emph{star-free regular languages} is the smallest class
containing $\{\varepsilon\}, \Sigma^*$ and $\{a\}$ for any $a \in \Sigma$
and closed under the boolean operations and concatenation.
Let us state the following, due to~\cite{DBLP:journals/iandc/Schutzenberger65a}.

\begin{theorem}[Schützenberger~\cite{DBLP:journals/iandc/Schutzenberger65a,McNaughton71}]
\label{thm:star-free_equals_aperiodic}
 The class of star-free languages equals the class 
 of aperiodic languages.
\end{theorem}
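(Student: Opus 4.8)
The plan is to route both characterizations through the \emph{syntactic monoid}. Recall that the syntactic congruence of $L \subseteq \Sigma^*$ identifies $u,v$ whenever $xuy \in L \leftrightarrow xvy \in L$ for all $x,y \in \Sigma^*$, and that the quotient $M(L) = \Sigma^*/{\sim_L}$ together with the syntactic morphism $\eta \colon \Sigma^* \to M(L)$ is the smallest monoid recognizing $L$. The first step is to establish the dictionary between Definition~\ref{def:aperiodic_aut} and algebra: a regular language $L$ is aperiodic (in the sense of Definition~\ref{def:aperiodic_lang}) if and only if $M(L)$ is finite and \emph{aperiodic} as a monoid, i.e.\ there is $n$ with $s^n = s^{n+1}$ for all $s \in M(L)$. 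Here the transition monoid of the minimal automaton of $L$ is isomorphic to $M(L)$, while the transition monoid of any automaton recognizing $L$ has $M(L)$ as a divisor; since the identity $s^n = s^{n+1}$ passes to divisors and submonoids, $L$ admits an aperiodic automaton precisely when $M(L)$ is aperiodic. It therefore suffices to show that both the star-free languages and the aperiodic languages are exactly the languages with aperiodic syntactic monoid.

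For the easy inclusion (star-free $\Rightarrow$ aperiodic) I would argue by structural induction on the star-free expression. The base languages $\{\varepsilon\}$, $\Sigma^*$ and $\{a\}$ have small, visibly aperiodic syntactic monoids. Closure of the aperiodic-syntactic-monoid class under the Boolean operations is routine: complementation leaves $M(L)$ unchanged, and for union and intersection one realizes $M(L_1 \cup L_2)$ as a divisor of $M(L_1) \times M(L_2)$, whose aperiodicity threshold is the maximum of the two thresholds. The genuinely delicate base-direction step is closure under concatenation; here I would either invoke that the variety $\mathbf{A}$ of aperiodic monoids is closed under the algebraic operation corresponding to concatenation (via the Schützenberger product), or, avoiding variety theory, give a direct automaton argument showing that the subset/product construction recognizing $L_1 L_2$ can be kept aperiodic when both factors are.

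The heart of the theorem is the converse (aperiodic $\Rightarrow$ star-free), and this is where the real work lies. I would prove, by induction on $|M(L)|$, that for aperiodic $M(L)$ every class $\eta^{-1}(m)$ with $m \in M(L)$ is star-free; since $L$ is a finite union of such classes, this is enough. The base case $|M(L)| = 1$ yields only $\emptyset$ and $\Sigma^*$. For the inductive step I would exploit Green's relations, using the key structural consequence of aperiodicity that the relation $\mathcal H$ is trivial (every $\mathcal H$-class is a singleton, since a nontrivial $\mathcal H$-class containing an idempotent would form a nontrivial group). The combinatorial core is a decomposition lemma: one tracks the images $\eta(u)$ of the prefixes $u$ of a word $w$ and isolates the first position at which a prefix leaves a fixed $\mathcal R$- or $\mathcal J$-class, thereby writing $\eta^{-1}(m)$ as a finite Boolean combination of \emph{marked products} $K a K'$ with $a \in \Sigma$ and $K, K'$ recognized by proper divisors of $M(L)$. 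By the induction hypothesis these building blocks are star-free, and star-freeness is preserved under the finitely many Boolean operations and the (unambiguous) marked products that appear, so $\eta^{-1}(m)$ is star-free.

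I expect the main obstacle to be exactly this decomposition step in the hard direction: making the prefix-image analysis precise enough that the pieces $K$ and $K'$ are genuinely recognized by \emph{smaller} aperiodic monoids, so that the induction hypothesis applies, and verifying that every marked product occurring is unambiguous, so that it does not secretly reintroduce a Kleene star. Controlling the case analysis across Green's classes, and in particular using the triviality of $\mathcal H$ to rule out any periodic behaviour, is the crux; by comparison the Boolean-closure arguments and the base cases are mechanical.
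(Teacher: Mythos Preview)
The paper does not supply its own proof of this theorem; it is merely stated with attribution to Sch\"utzenberger and McNaughton--Papert and then used as background. So there is nothing in the paper to compare your argument against.

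Your outline is the standard textbook route (syntactic monoid, easy direction by structural induction, hard direction by induction on $|M(L)|$ exploiting $\mathcal H$-triviality and a decomposition into marked products), and it is essentially correct. One remark: your worry that the marked products $K a K'$ need to be \emph{unambiguous} ``so that [they do] not secretly reintroduce a Kleene star'' is misplaced. Concatenation is a primitive of star-free expressions, so $K a K'$ is star-free as soon as $K$ and $K'$ are, regardless of ambiguity; unambiguity plays no role in this argument (it matters for finer hierarchies such as the polynomial closure, but not here). Apart from that, your identification of the genuine difficulty---ensuring the factors $K, K'$ are recognized by strictly smaller aperiodic monoids so the induction applies---is exactly right.
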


Next, we introduce the group languages.

\begin{definition}[McNaughton~\cite{McNaughton67}]
\label{def:grp_lang}
A \emph{(pure-)group language}\footnote{These were
introduced in~\cite{McNaughton67} under the name of pure-group events.}
is a language recognized by an automaton $\mathcal A = (\Sigma, Q, \delta, q_0, F)$
where every letter acts as a permutation on the state set\footnote{
Such automata are also called \emph{permutation automata}, and the name stems
from the fact that the transformation monoid of such an automaton forms a group.}, 
i.e., if $a \in \Sigma$, then the map $\delta_a : Q \to Q$
given by $\delta_a(q) = \delta(a,q)$ for $q \in Q$
is total and a permutation of $Q$. Such an automaton is called a \emph{permutation automaton}.
\end{definition}

Observe that a permutation automaton, as defined here, is always complete\footnote{Another
way would be, to allow incomplete automata, to insist that every letter either gives a permutation
or labels no transition.}.

\begin{remark}
\label{rem:grp_lang}
Note some ambiguity here in the sense that if $\Sigma = \{a,b\}$, then $(aa)^{\ast}$
is not a group language over this alphabet, but it is over the unary alphabet $\{a\}$.
Hence we mean the existence of an alphabet such that the language is recognized by a permutation automaton over this alphabet. By definition, $\{\varepsilon\}$ is considered
to be a group language\footnote{It is not possible
to give such an automaton for $|\Sigma|\ge 1$,
but allowing $\Sigma=\emptyset$ the single-state automaton
will do, or similarly as $\Sigma^{\ast} = \{\varepsilon\}$ in this case.}. 
Also, group languages are closed under the boolean operations
if viewed over a common alphabet, but not over different alphabets. For instance, $L = (aa)^* \cup (bbb)^*$
is not a group language.
\end{remark}

\subsection{Commutative Aperiodic and Group Languages}


The next definitions and results are taken from~\cite{DBLP:reference/hfl/Pin97,Pin86}.
For $a \in \Sigma$ and $n,r \ge 0$ set
\[
F(a,r,n) = \{ u \in \Sigma^* \mid |u|_a \equiv r \pmod{n} \},
\]
and, for $a \in \Sigma$
and $t \ge 0$,
\[
 F(a,t) = \{ u \in \Sigma^+ \mid |u|_a \ge t \}.
\]
Note that these sets are defined relative to an alphabet $\Sigma$.

\begin{example} 
\label{ex:F_sets}
Let $\Sigma$ be a non-empty alphabet, $a \in \Sigma$
and $\Gamma \subseteq \Sigma$.

\begin{enumerate}
\item $F(a, 0, 1) = \Sigma^*$.
\item $F(a, 0, 2) \cap F(a, 3, 4) = \emptyset$.
\item $F(a, 1) = \Sigma^* a \Sigma^*$.
\item $\Gamma^* = \Sigma^* \setminus \left( \bigcup_{b \in \Sigma \setminus \Gamma} F(b, 1) \right)$.
\end{enumerate}
 
\end{example}

\begin{theorem}[\cite{DBLP:reference/hfl/Pin97,Pin86}]
\label{thm:com_varities}
 Let $\Sigma$ be an non-empty\footnote{For $\Sigma = \emptyset$, we set all these classes 
 to equal $\{ \emptyset, \{\varepsilon\} \}$.} 
 alphabet.
 
 \begin{enumerate}
 \item  The class of commutative group languages over $\Sigma$ is the boolean
 algebra generated by the languages of the form $F(a, r, n)$, where $a \in \Sigma$
 and $0 \le r < n$.
 
 \item The class of commutative aperiodic languages over $\Sigma$
  is the boolean algebra generated by the languages of the form $F(a, t)$,
  where $a \in \Sigma$ and $t \ge 0$.
 
 \item The class of all commutative regular languages over $\Sigma$
  is the boolean algebra generated by the languages
  of the form $F(a, t)$ or $F(a, r, n)$, where  $t \ge 0$, $0 \le r < n$
  and $a \in \Sigma$.
 \end{enumerate}
\end{theorem}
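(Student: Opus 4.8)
The plan is to prove each of the three statements by a two-way inclusion and to obtain all three from a single structural analysis of the Nerode right-congruence of a commutative regular language.

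First I would dispatch the easy inclusion, that the generated boolean algebra is contained in the corresponding class. The set $F(a,r,n)$ is recognized by a cyclic permutation automaton on $n$ states that counts $|u|_a \bmod n$, so it is a commutative group language. The set $F(a,t)$ is recognized by a saturating-counter automaton with states $0,1,\ldots,t$, where the letter $a$ advances the count until it sticks at $t$ and every other letter acts as the identity; this automaton is aperiodic, since $\delta(q,a^t)=\delta(q,a^{t+1})$, and its language depends only on $|u|_a$, so $F(a,t)$ is a commutative aperiodic language. As the commutative languages are closed under all boolean operations by Theorem~\ref{thm::comm_lang_closure}, and the group languages (respectively the aperiodic languages) are closed under the boolean operations when taken over a fixed common alphabet, the boolean algebra generated by the stated sets lies inside the corresponding class in each of the three cases.

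The substantial direction is the converse. Let $L \subseteq \Sigma^*$ be commutative and regular with $\Sigma = \{a_1,\ldots,a_k\}$. Being commutative, its Nerode right-congruence $\equiv_L$ depends on a word only through its Parikh vector $\psi(w) \in \mathbb N_0^k$, and being regular it has finite index; equivalently, the syntactic monoid is finite and commutative. The key lemma I would establish is a uniform threshold--period description: there exist integers $t \ge 0$ and $p \ge 1$ such that in the transition monoid $a^t \equiv a^{t+p}$ holds for every letter $a$, so that membership of $w$ in $L$ is determined by the \emph{profile} recording, for each $a$, the exact value $|w|_a$ when $|w|_a < t$ and otherwise only the residue $|w|_a \bmod p$. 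This is exactly the fact that a finite commutative monoid admits a common index $t$ and period $p$ with $x^t = x^{t+p}$ for all $x$, transported to the counting behaviour of each generator. Granting this, every $\equiv_L$-class, and hence $L$ itself as a finite union of classes, is a finite boolean combination of atomic conditions of the two shapes ``$|w|_a = c$'' with $c < t$ and ``$|w|_a \ge t$ and $|w|_a \equiv r \pmod p$''. The first equals $\big(F(a,c)\setminus F(a,c+1)\big)$ for $c \ge 1$ and $\Sigma^* \setminus F(a,1)$ for $c = 0$, while the second equals $F(a,t) \cap F(a,r,p)$; all of these are boolean combinations of the generating sets, which yields statement~3.

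Parts 1 and 2 then follow by specializing the index and the period. If $L$ is moreover a group language, its syntactic monoid is a finite commutative \emph{group}, so no generator exhibits saturating behaviour and one may take $t = 0$; the profile then uses only residues $|w|_a \bmod p$, so $L$ is a boolean combination of sets $F(a,r,n)$ alone, giving statement~1. If instead $L$ is aperiodic, its syntactic monoid is aperiodic and one may take $p = 1$; the profile then uses only the capped counts $\min(|w|_a,t)$, so $L$ is a boolean combination of sets $F(a,t)$ alone, giving statement~2. The main obstacle is the uniform threshold--period lemma together with these two specializations: one must argue that the finitely many local indices and periods furnished by the syntactic monoid can be replaced by a single global $t$ and $p$, and that membership is genuinely a function of the resulting profile rather than of finer Parikh data. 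Once this is in place, the translation into boolean combinations of the $F$-sets is routine.
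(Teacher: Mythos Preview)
The paper does not supply its own proof of Theorem~\ref{thm:com_varities}; it is quoted as a known result from Pin's work~\cite{DBLP:reference/hfl/Pin97,Pin86}, so there is no paper-side argument to compare against. Your outline is the standard route and is essentially correct: the easy inclusion via explicit automata for $F(a,r,n)$ and $F(a,t)$, together with closure of the three classes under boolean operations over a fixed alphabet, is fine; and the converse via a uniform index--period pair $(t,p)$ in the commutative syntactic monoid, yielding a letter-wise profile that determines membership, is exactly how one proves this. The specializations $t=0$ for groups and $p=1$ for aperiodic monoids are also correct.

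Two small points worth tightening. First, when you express ``$|w|_a \ge t$ and $|w|_a \equiv r \pmod p$'' as $F(a,t)\cap F(a,r,p)$, note that the actual syntactic atom is ``$|w|_a \ge t$ and $|w|_a - t \equiv r' \pmod p$''; since $t$ is fixed this is the same set up to a shift of the residue, so nothing breaks, but it is worth saying explicitly. Second, your uniform $(t,p)$ lemma is not an obstacle at all: in a finite commutative monoid one simply takes $t$ to be the maximum of the individual indices of the generators $a_1,\ldots,a_k$ and $p$ the least common multiple of their periods; commutativity then guarantees that the image of $w$ depends only on the tuple $\big(\min(|w|_{a_i},t),\,|w|_{a_i}\bmod p\big)_{i}$, which is precisely your profile. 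With these clarifications the proof goes through without difficulty.
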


A \emph{positive boolean algebra} is a class of sets closed
under union and intersection.
In~\cite{DBLP:reference/hfl/Pin97}, the positive variety $\mathbf{Com}^+$ was introduced. 
A \emph{positive variety}~\cite{DBLP:reference/hfl/Pin97,Pin86} $\mathcal V$
of languages maps any alphabet $\Sigma$ to a subclass $\mathcal V(\Sigma^*)$ of languages
over this alphabet that is closed under union, intersection, quotients
and inverse homomorphisms.
I only mention in passing that there is a rich theory between positive varieties of languages
and so called pseudovarieties of finite ordered semigroups~\cite{Pin86}.
Originally, $\textbf{Com}^+$ was defined in terms of certain ordered semigroups, but here, as
we do not introduce these notions, we introduce it with an equivalent characterization from~\cite{DBLP:reference/hfl/Pin97}.

\begin{definition}[\cite{DBLP:reference/hfl/Pin97}]
\label{def:com_plus}
 For every alphabet $\Sigma$, the class $\mathbf{Com}^+(\Sigma^*)$
 is the positive boolean algebra generated by the languages of the form $F(a, t)$
 and $F(a, r, n)$, where $a \in \Sigma$ and $t \ge 0$, $0 \le r < n$.
\end{definition}


\begin{lemmarep} 
\label{lem:no_gamma}
 Let $\Sigma$ be a non-empty set\footnote{For $\Sigma = \emptyset$, we set $\mathbf{Com}^+(\Sigma^*) = 
 \{ \emptyset, \{\varepsilon\} \}$.}
 and $\Gamma \subseteq \Sigma$ be a proper subset. 
 Then, $\{ \Gamma^*, \Gamma^+ \} \cap \mathbf{Com}^+(\Sigma^*) = \emptyset$.
\end{lemmarep}
\begin{proof}
 Every language in $\mathbf{Com}^+(\Sigma^*)$
 could be written as a union over intersections of languages of the form
 $F(a, t)$
 and $F(a, r, n)$, where $a \in \Sigma$ and $t \ge 0$, $0 \le r < n$.
 Let $L \subseteq \Sigma^*$ be such an intersection
 of these languages. If $L \ne \emptyset$,
 by Theorem~\ref{thm:CRT},
 we can suppose for each $a \in \Sigma^*$
 at most one set of the form $F(a, r, n)$ for $0 \le r < n$ appears in an expression
 for $L$ as an intersection. Also, as
 $F(a, t_1) \cap F(a, t_2) = F(a, \max\{k_1, k_2\})$ for $t_1, t_2 \ge 0$
 we can also suppose for each letter at most one set 
 of the form $F(a, t)$ for $t \ge 0$
 appears in an expression for $L$ for any $a \in \Sigma$.

 Fix $a \in \Sigma$.
 As, for $b \in \Sigma$, $F(b, t) = (F(b, t) \cap b^*) \shuffle (\Sigma\setminus\{b\})^*$
 and $F(b, r, n) = (F(b, r, n) \cap b^*) \shuffle (\Sigma\setminus\{b\})^*$,
 we can then deduce that $a^* \cap L$ is non-empty, actually infinite.
 Hence, every union of such languages has this property, which gives the claim.
 In particular, no non-empty language $L \subseteq \Gamma^*$
 is in $\mathbf{Com}^+(\Sigma^*)$.~\qed
\end{proof}

Note that the previous lemma, by choosing $\Gamma = \emptyset$,
implies for $\Sigma \ne \emptyset$ that $\{\varepsilon\} \notin \mathbf{Com}^+(\Sigma^*)$.
The sets $F(a,t)$ were defined as subsets of $\Sigma^+$~\cite{DBLP:reference/hfl/Pin97}, not~$\Sigma^*$. However, this makes no difference as $\Sigma^+ = F(a, 0) = \bigcup_{b \in \Sigma} F(a,1)$ and $F(a, 0, 1) = \Sigma^*$
and so $\{ \Sigma^+, \Sigma^* \} \subseteq \mathbf{Com}^+(\Sigma^*)$.





\section{Commutative Aperiodic and Group Languages under Projection}
\label{subsec:proj_star_free_pure_group}

First, we strengthen Theorem~\ref{thm:com_varities} for commutative group languages.

\begin{theoremrep}
\label{thm:group_union_Fak}
 A commutative language $L\subseteq \Sigma^*$
 is a group language if and only if
 it could be written as a finite union
 of languages of the form
 \[
  \bigcap_{i=1}^m F(a_i, k_i, n_i),
 \]
 where $a_i \in \Sigma$ and $0 \le k_i < n_i$ for $i \in \{1,\ldots,m\}$ with $m \ge 0$.
\end{theoremrep}
\begin{proof}
 Let $L \subseteq \Sigma^*$ be a commutative group language.
 Then, by Theorem~\ref{thm:com_varities},
 $L$ is in the boolean algebra generated by languages of the form $F(a,n,k)$.
 First, by using DeMorgan's laws, $L$ is in the
 positive boolean closure of languages
 of the form
 $
  F(a,k,n)$ or $\overline{F(a,k,n)}.
 $
 Now,
 \[
  \overline{F(a,k,n)} 
   = \bigcup_{i \in \{0,\ldots,n-1\}\setminus \{k\}} F(a,i,n).
 \] 
 Hence, we can suppose $L$
 is in the positive boolean closure of languages
 of the form $F(a,k,n)$.
 As intersection distributes over union, we can then write $L$
 as a union of intersection of languages of the form $F(a,k,n)$, i.e,
 $L$ is a union of languages of the form
 \[
  \bigcap_{i=1}^m F(a_i, k_i, n_i).
 \]
 Hence, we have shown the claim
%
%

Conversely, if $L$ is written as a union over languages of the form as
written, then, by Theorem~\ref{thm:com_varities},
it is a commutative group language.~\qed
\end{proof}

A similar statement holds for the star-free languages.
But we cannot use the languages $F(a, t)$ introduced earlier.
Set, for $a \in \Sigma$ and $k_1, k_2 \in \mathbb N_0 \cup \{\infty\}$,
 \[
  I(a, k_1, k_2) = \{ u \in \Sigma^* \mid k_1 \le |u|_a < k_2 \}. 
 \]


\begin{theoremrep} 
\label{thm:aperiodic_union_Fakk}
 A commutative language $L \subseteq \Sigma^*$ is aperiodic if and only if
 it could be written as a finite union
 of sets of the form 
 \[ 
  \bigcap_{i=1}^n I(a_i, r_i, s_i),
 \]
 where $0 \le r_i < s_i$ and $a_i \in \Sigma$ for $i \in \{1,\ldots,n\}$ with $n \ge 0$.
\end{theoremrep}
\begin{proof}
 We use the characterization stated in Theorem~\ref{thm:com_varities}.
 First, let $L \subseteq \Sigma^*$ be a commutative and star-free language.
 We have 
 \begin{align*} 
  \overline{F(a,k)} & = \{\varepsilon\} \cup \{ u \in \Sigma^* \mid |u|_a < k \} = I(a, 0, k), \\
  F(a,k)            & = \left\{ 
  \begin{array}{ll}
   I(a, k, \infty) & \mbox{if } k > 0; \\
   \bigcup_{b\in \Sigma} I(b, 1, \infty) & \mbox{if } k = 0. 
  \end{array}\right.
 \end{align*}
 Hence, $L$ is in the positive boolean algebra generated
 by sets of the form $I(a, r, s)$.
 As intersection distributes over union, we can then write $L$
 as a union of intersections of languages of the form $I(a, r, s)$, i.e.,
 $L$ is a union of languages of the form
 \[
  \bigcap_{i=1}^n I(a_i, r_i, s_i).
 \]
 Conversely, suppose $L$ has the form as written in the statement.
 Then,
 \begin{align*}
     I(a, r, s) = \left\{
     \begin{array}{ll}
       F(a, r) \cap \overline{F(a,s)} & \mbox{if } r > 0, s \ne \infty; \\
       F(a, r)                        & \mbox{if } r > 0, s = \infty; \\
       \overline{F(a, s)}             & \mbox{if } r = 0, s \ne \infty; \\
       \Sigma^*                       & \mbox{if } r = 0, s = \infty.
     \end{array}\right.
 \end{align*}
 As $\Sigma^* = \overline{(F(a, 0) \cap \overline{F(a, 0)})}$,
 we find that $L$ is in the boolean closure
 of languages of the form $F(a,k)$.
 Hence, by Theorem~\ref{thm:com_varities}, $L$ is a commutative star-free language.~\qed
\end{proof}

Next, we state how these languages
behave under projection.

\begin{lemmarep} 
\label{lem:I_sets_proj}
Let $\Gamma \subseteq \Sigma$, $n \ge 0$, $a_i \in \Sigma$
and $0 \le r_i < s_i$ for $i \in \{1,\ldots,n\}$.
Then,
 \[
  \pi_{\Gamma}\left( \bigcap_{i=1}^n I(a_i, r_i, s_i) \right) 
   = \left( \bigcap_{\substack{i \in \{1,\ldots,n\} \\ a_i \in \Gamma}} I(a_i, r_i, s_i) \right) \cap \Gamma^*.
 \]
\end{lemmarep}
\begin{proof}
 Let $a \in \Sigma$ and $0 \le k_1 < k_2$.
 Then, for $a \in \Gamma$ we have
 $k_1 \le |\pi_{\Gamma}(u)|_a < k_2$
 if and only if $k_1 \le |u|_a < k_2$.
 The letters not in $\Gamma$ are deleted
 and do not appear in the image words,
 hence do not contribute to the result.~\qed
\end{proof}

With Lemma~\ref{lem:I_sets_proj}, we can prove that the star-free commutative languages
are closed under projections.

\begin{propositionrep}
\label{prop:proj_star_free}
 Let $L \subseteq \Sigma^*$ be commutative and star-free.
 Then, for any $\Gamma \subseteq \Sigma$,
 the language $\pi_{\Gamma}(L)$ is commutative star-free.
\end{propositionrep}
\begin{proof}
 By Theorem~\ref{thm::comm_lang_closure}, 
 the projected language is commutative.
 By Theorem~\ref{thm:com_varities} and as, for $U, V \subseteq \Sigma^*$, 
 \[
  \pi_{\Gamma}(U \cup V) = \pi_{\Gamma}(V) \cup \pi_{\Gamma}(V)
 \]
 and the star-free languages are closed under union and intersection,
 we only need to show, by Theorem~\ref{thm:aperiodic_union_Fakk}
 and Lemma~\ref{lem:I_sets_proj},
 that $\Gamma^*$ is star-free.
 But this is shown in Example~\ref{ex:F_sets}.~\qed
\end{proof}

In general, for homomorphic mappings, this is not true, as $a^*$ could be mapped
homomorphically onto $(aa)^*$, and $(aa)^*$ is not star-free~\cite{McNaughton71}.
Also, more specifically, there exist non-commutative
star-free languages with a non-star-free projection language. For example,
the language $L = (aba)^*$ is star-free, as
\[
 L = \{\varepsilon\} \cup (aba\Sigma^* \cap \Sigma^*aba) \setminus (\Sigma^*\cdot\{aaa,bba,bab,abb\}\cdot\Sigma^*),
\]
but $\pi_{\{a\}}(L) = (aa)^*$.
Similarly, with Theorem~\ref{thm:group_union_Fak},
we can show the next result.

\begin{propositionrep}
 Let $L \subseteq \Sigma^*$ be a commutative group language. 
 Then, for any $\Gamma \subseteq \Sigma$,
 the language $\pi_{\Gamma}(L)$ is a commutative group language.
\end{propositionrep}
\begin{proof}
 The proof is similar to the proof of Proposition~\ref{prop:proj_star_free},
 but using
 Theorem~\ref{thm:group_union_Fak},
 a similar property for the intersection
 of sets of the form $F(a, r, n)$
 and the fact that $\Gamma^*$
 is a group language when considering
 $\Gamma$ as the whole alphabet in the image of $\pi_{\Gamma}$ (but not 
 when it is a proper subalphabet of $\Sigma$,
 compare Remark~\ref{rem:grp_lang}).~\qed
\end{proof}

However, also here, this is false for general group languages.
The language $(aa)^*$ could be mapped homomorphically
onto $L = (abab)^*$, which is not a group language. 
Also, for projections, consider the group language given by the permutation
automaton
$\mathcal A = (\{a,b\}, \{0,1,2\}, \delta, 0, \{2\})$
with $\delta(0,a) = 1$, $\delta(1,a) = 0$, $\delta(2,a) = 2$
and $\delta(0,b) = 1$, $\delta(1,b) = 2$, $\delta(2,b) = 0$.
Then, $\pi_{\{b\}}(L(\mathcal A)) = bb^*$, which is not a group language. 
For example, $b$ is the projection of $ab \in L(\mathcal A)$,
or $bbb$ the projection of $abbab \in L(\mathcal A)$.

\section{A Class of Regular Languages Closed under Iterated Shuffle}
\label{subsec:it_shuffle}


Here, we introduce a subclass of commutative regular languages,
which contains the commutative group languages,
that is closed under iterated shuffle.
In Definition~\ref{def:diagonal_periodic_lang},
we introduce the \emph{diagonal periodic} languages, and first
establish that the iterated shuffle of such a language
gives a language that is a union of diagonal periodic languages.
We then use this result to show closure under this operation
of our subclass, which either could be described as
the positive boolean algebra generated by languages of the form $F(a, n, k)$, $F(a, k)$, $\Gamma^*$
and $\Gamma^+$ for $\Gamma\subseteq \Sigma$, $a \in \Sigma$, $0 \le k < n$,
or as finite unions of diagonal periodic languages.

Note that, for already very simple languages, the iterated shuffle can give
non-regular languages, for example
$
  (a \shuffle b)^{\itshuffle}         = \{ab,ba\}^{\itshuffle}  = \{ u \in \{a,b\}^* \mid |u|_a = |u|_b \}$,
  or
$(a \shuffle \{b,bb\})^{\itshuffle}  = \{ u \in \{a,b\}^* \mid |u|_b \le |u|_a \le 2|u|_b \}$.


\begin{definition} 
\label{def:diagonal_periodic_lang}
 A \emph{diagonal periodic} language over $\Gamma \subseteq \Sigma$ is a language
 of the form 
 \[
  \bigshuffle_{a \in \Gamma} a^{k_a} (a^{p_a})^*,
 \]
 where $k_a \ge 0$ and $p_a > 0$ for $a \in \Gamma$ when $\Gamma \ne \emptyset$,
 or the language $\{\varepsilon\}$.
\end{definition}


\begin{remark}
 Let $\Sigma = \{a_1, \ldots, a_k\}$
 In~\cite{EhrenfeuchtHR83} a sequence of vectors $\rho = v_0, v_1, \ldots, v_k$ from $\mathbb N_0^k$
was called a \emph{base} if $v_i(j) = 0$ for\footnote{Note that the entries of
$v \in \mathbb N_0^k$ are numbered by $1$ to $k$, i.e., $v = (v(1),\ldots, v(k))$.}
$i,j \in \{1,\ldots,k\}$ such that $i \ne j$.  The $\rho$-set was defined
as
$
 \Theta(\rho) = \{ v \in \mathbb N^k : v = v_0 + l_1 v_1 + \ldots + l_k v_k \mbox{ for some } l_1, \ldots, l_k \in \mathbb N_0 \}.
$
 Then, in~\cite{EhrenfeuchtHR83}, a language $L \subseteq \Sigma^*$
 was called \emph{periodic} if, for some fixed order $\Sigma = \{a_1, \ldots, a_k\}$,
 there exists a base $\rho$ such that
 $
  L = \psi^{-1}(\Theta(\rho)).
 $
 With this geometric view, the diagonal periodic languages
 are those periodic languages such that, for $i,j \in \{1,\ldots,k\}$, 
 either
 \[
  v_i(j) \ne 0 \mbox{ or } v_i(j) = v_0(j) = 0.
 \]
 Intuitively, and very roughly, the vector $\sum_{a_i \in \Gamma} v_i$ points diagonally in the
 subspace corresponding to the letters in $\Gamma$, or more precisely,
 the dimension of the subspace spanned by $v_1, \ldots, v_k$ is precisely $|\Gamma|$.
 Hence, the name diagonal periodic.
\end{remark}

As the languages $a^{k_a}(a^{p_a})^*$, $a \in \Gamma$, are regular
and the binary shuffle operation is regularity-preserving~\cite{Ito04},
we get the next result. But it was also established in~\cite{EhrenfeuchtHR83,DBLP:conf/cai/Hoffmann19,Hoffmann20}
for the more general class of periodic languages.

\begin{proposition}
\label{prop:diag_periodic_regular}
 The diagonal periodic languages are regular and commutative.
\end{proposition}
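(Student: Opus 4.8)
The plan is to treat regularity and commutativity separately, since both follow quickly from results already in place. For regularity I would first note that each single-letter factor $a^{k_a}(a^{p_a})^*$ is regular, being given by an explicit regular expression over the one-letter alphabet $\{a\}$, and hence regular as a subset of $\Sigma^*$. Since $\Gamma$ is finite, the diagonal periodic language $\bigshuffle_{a \in \Gamma} a^{k_a}(a^{p_a})^*$ is obtained from these finitely many regular languages by a finite number of applications of the binary shuffle operation $\shuffle$. As the binary shuffle is regularity-preserving~\cite{Ito04}, a straightforward induction on $|\Gamma|$ then shows that the whole shuffle product is regular, with the base case being a single regular factor. The remaining language $\{\varepsilon\}$ is trivially regular.

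For commutativity I would observe that each factor $a^{k_a}(a^{p_a})^*$ consists only of words over the single letter $a$, so that for any such word $w$ one has $\perm(w) = \{w\}$; hence each factor is commutative when viewed as a subset of $\Sigma^*$. By Theorem~\ref{thm::comm_lang_closure}, the class of commutative languages is closed under the shuffle operation, and an easy induction on $|\Gamma|$ then yields that the finite shuffle $\bigshuffle_{a \in \Gamma} a^{k_a}(a^{p_a})^*$ is again commutative. Once more the language $\{\varepsilon\}$ is commutative trivially.

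I do not expect a genuine obstacle here, since the statement is essentially a bookkeeping consequence of the closure of both the regular and the commutative languages under the binary shuffle, combined with the finiteness of $\Gamma$. The only point requiring a moment's care is the observation that a one-letter language, while trivially commutative over its own alphabet, remains commutative when regarded inside $\Sigma^*$ for the ambient alphabet $\Sigma$; this is immediate from the definition $\perm(L) = L$, since permuting a word built from a single repeated letter cannot produce a new word.
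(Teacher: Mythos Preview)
Your proposal is correct and matches the paper's own justification almost verbatim: the paper simply remarks, just before the proposition, that the factors $a^{k_a}(a^{p_a})^*$ are regular and that the binary shuffle is regularity-preserving~\cite{Ito04}, leaving commutativity implicit. Your write-up is in fact slightly more complete, since you spell out the commutativity argument via Theorem~\ref{thm::comm_lang_closure} and handle the $\{\varepsilon\}$ case explicitly.
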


\begin{remark} 
 Suppose, for each $a \in \Sigma$, we have a unary language
 $L_a \subseteq a^*$ and $\Gamma \subseteq \Sigma$.
 Then, $\pi_{\Gamma}(\bigshuffle_{a \in \Sigma} L_a) = \bigshuffle_{a \in \Gamma} L_a$
 and $\pi_{\Sigma}^{-1}( \bigshuffle_{a \in \Gamma} L_a ) =  \bigshuffle_{a \in \Gamma} L_a \shuffle (\Sigma\setminus\Gamma)^*$.
 This could be worked out to give a different
 proofs for the results from Subsection~\ref{subsec:proj_star_free_pure_group}.
\end{remark} 

\begin{remark}
 The reason a subalphabet $\Gamma \subseteq \Sigma$ is included in Definition~\ref{def:diagonal_periodic_lang}, 
 and later in the statements, 
 is due to Lemma~\ref{lem:no_gamma}, i.e., to have a larger
 class as given by $\mathbf{Com}^+$.
\end{remark}

 Next, we investigate what languages we get if we apply
 the iterated shuffle to diagonal periodic languages.

\begin{propositionrep} 
\label{prop:diagonal_periodic_it_shuffle_reg}
 The iterated shuffle of a diagonal periodic
 language $L \subseteq \Sigma^*$ over $\Gamma\subseteq \Sigma^*$ is
 a finite union of diagonal periodic languages. In particular,
 it is regular.
\end{propositionrep} 
\begin{proof}
 Let $L \subseteq \Sigma^*$ be a diagonal periodic language over $\Gamma \subseteq \Sigma$.
 Write
  \[
  L = \bigshuffle_{a \in \Gamma} a^{k_a} (a^{p_a})^*
 \]
 for numbers $k_a \ge 0$, $p_a > 0$ with $a \in \Gamma$.
 Now, by Theorem~\ref{thm:shuffle_properties} and as
 for unary language concatenation and shuffle coincide,
 we find $\bigshuffle_{i=1}^m L = \bigshuffle_{a \in \Gamma} a^{m\cdot k_a} (a^{p_a})^*$.
 So,
 \[
  L^{\itshuffle} = \{\varepsilon\}\cup \bigcup_{m > 0} \bigshuffle_{a \in \Gamma} a^{m\cdot k_a} (a^{p_a})^* .
 \]
 Hence, $u \in L^{\itshuffle}$ if and only if there exists $r_a \ge 0$, $a \in \Gamma$,
 such that $u \in \bigshuffle_{a \in \Gamma} a^{m \cdot k_a + r_a \cdot p_a}$ for some $m > 0$.
 Now, fix $a \in \Gamma$ and consider
 the sum $m \cdot k_a + r_a \cdot p_a$.
 We have, for any $t \in \mathbb Z$, 
 \[
  m \cdot k_a + r_a \cdot p_a = (m - tp_a) \cdot k_a + (r_a + tk_a) \cdot p_a.
 \]
 In particular, we can choose $t \in \mathbb Z$
 such that $1 \le m - tp_a \le p_a$ and $r_a + tk_a \ge 0$.
 Hence,
 \[
  \bigcup_{m > 0} a^{m\cdot k_a} (a^{p_a})^* = \bigcup_{i = 1}^{p_a} a^{i \cdot k_a} (a^{p_a})^*.
 \]
 Let $N$ be the least common multiple of the numbers $p_a$, $a \in \Gamma$.
 Suppose 
 \[ u \in \bigcup_{m > 0} \bigshuffle_{a \in \Gamma} a^{m\cdot k_a} (a^{p_a})^*.
 \]
 Then, there exist numbers $r_a \ge 0$, $a \in \Gamma$, and $m > 0$
 such that
 \[
  u = \bigshuffle_{a \in\Gamma} a^{m \cdot k_a + r_a \cdot p_a}.
 \]
 Similarly as above, for any $a \in \Gamma$ and $t \ge 0$,
 we have
 \[
  m \cdot k_a + r_a \cdot p_a = (m - tN) \cdot k_a + \left(r_a + t \frac{N}{p_a} k_a \right) \cdot p_a.
 \] 
 So, we can choose $t \ge 0$
 such that $1 \le m - tN \le N$ and
 \[
  u = \bigshuffle_{a \in\Gamma} a^{(m - tN) \cdot k_a + \left(r_a + t \frac{N}{p_a} k_a \right) \cdot p_a}
   \in \bigcup_{i=1}^N  \bigshuffle_{a\in \Gamma} a^{i\cdot k_a}(a^{p_a})^*.
 \] 
 Hence, we have shown
 $
   L^{\itshuffle} \subseteq \{\varepsilon\} \cup \bigcup_{i=1}^N \bigshuffle_{a\in \Gamma} a^{i\cdot k_a}(a^{p_a})^*.
 $
 The other inclusion is obvious, and we find 
 \[
  L^{\itshuffle} = \{\varepsilon\} \cup \bigcup_{i=1}^N  \bigshuffle_{a\in \Gamma} a^{i\cdot k_a}(a^{p_a})^*.
 \]
 So, as a finite union of diagonal periodic languages, hence regular languages by Proposition~\ref{prop:diag_periodic_regular},
 the language $L^{\itshuffle}$ is itself regular.~\qed
\end{proof}

%


The next lemma is the link between the languages $F(a, t)$, $t \ge 0$,
and $F(a, r, n)$, $0 \le r < n$, and
the diagonal periodic languages.

\begin{lemmarep} 
\label{lem:diagonal_periodic_other_form}
 Let $\Sigma_1, \Sigma_2 \subseteq \Sigma$.
 Suppose we have numbers $t_a$ for $a \in \Sigma_1$
 and $0 \le r_a < n_a$ for $a \in \Sigma_2$.
 Then,
 \[
  \bigcap_{a \in \Sigma_1} F(a, t_a) \cap \bigcap_{a \in \Sigma_2} F(a, r_a, n_a)
   = \bigshuffle_{a \in \Sigma} a^{k_a}(a^{p_a})^*,
 \] 
 where\footnote{For $x,n \in \mathbb N$, by $x \bmod n$ we denote the unique number $r \in \{0,\ldots,n-1\}$
 such that $r \equiv x \pmod{n}$.}
 \[
  k_a = \left\{ 
  \begin{array}{ll}
   t_a + (n_a - ((t_a - r_a) \bmod n_a)) & \mbox{if } a \in \Sigma_1 \cap \Sigma_2, t_a > r_a; \\
   r_a                           & \mbox{if } a \in \Sigma_1 \cap \Sigma_2, t_a  \le   r_a; \\
   r_a                           & \mbox{if } a \in \Sigma_2 \setminus \Sigma_1; \\
   t_a                           & \mbox{if } a \in \Sigma_1 \setminus \Sigma_2; \\
   0                             & \mbox{if } a \notin \Sigma_1 \cup \Sigma_2.
  \end{array}
  \right.
\]
and
$
  p_a = \left\{ 
   \begin{array}{ll}
   n_a & \mbox{if } a \in \Sigma_2; \\
   1   & \mbox{if } a \notin \Sigma_2. 
   \end{array}
  \right.
$
\end{lemmarep}
\begin{proof}
 For the first case, let us first state an auxiliary claim. Let $t,r,n \ge 0$
 with $t > r$ and $0 \le r < n$.
 
 \begin{claiminproof}
  For the unique number $m \ge 0$
  with $r + m\cdot n < t \le r + (m+1)\cdot n$
  we have $(m+1)n = t + (n - ((t-r) \bmod n))$.
 \end{claiminproof}
 \begin{claimproof}
  As $mn < t - r \le (m+1)n$,
  we have $mn + ((t - r) \bmod n) = t - r$.
  Hence,
  \begin{align*} 
   (m+1)n & = mn + ((t-r) \bmod n)) + (n - ((t-r) \bmod n)) \\
          & = t - r + (n - ((t - r) \bmod n)),
  \end{align*}
  which gives the claim.
 \end{claimproof}
 We have, for any $a \in \Sigma$,
 \begin{align*}
     F(a, t) \cap F(a, r, n) & = (a^{t}a^* \cap a^r(a^n)^*) \shuffle (\Sigma \setminus \{a\})^*. 
 \end{align*}
 And, by the above claim,
 \[
 a^{t}a^* \cap a^r(a^n)^*
  = \left\{ 
   \begin{array}{ll}
    a^r(a^n)^* & \mbox{if } t \le r; \\
    a^{t + (n - ((t - r) \bmod n)} (a^n)^* & \mbox{if } t > r.
   \end{array}
  \right.
 \]
 Furthermore,
 \begin{align*}
     F(a, r, n) & = a^r(a^n)^* \shuffle (\Sigma \setminus \{a\})^*; \\
     F(a, t)    & = a^ta^*.
 \end{align*}
 And these formulas give the claim.~\qed

\end{proof}

Now, we have everything together to prove our main theorem
of this subsection.

\begin{theorem} 
\label{thm:it_shuffle_reg_preserving}
 Let $L \subseteq \Sigma^*$ be in the positive boolean algebra
 generated by languages of the form 
 $F(a, k)$, $F(a, k, n)$, $\Gamma^+$ and  $\Gamma^*$ for $\Gamma \subseteq \Sigma$.
 Then, the iterated shuffle of $L$
 is contained in this positive boolean algebra. In particular, 
 the iterated shuffle is regular.
\end{theorem}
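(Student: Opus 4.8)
The plan is to reduce the iterated shuffle of a general element $L$ of this positive boolean algebra to the case already settled in Proposition~\ref{prop:diagonal_periodic_it_shuffle_reg}, namely the iterated shuffle of a single diagonal periodic language. The bridge between the two descriptions of the class is Lemma~\ref{lem:diagonal_periodic_other_form}, which rewrites an intersection of sets $F(a,t_a)$ and $F(a,r_a,n_a)$ as a single diagonal periodic language $\bigshuffle_{a\in\Sigma}a^{k_a}(a^{p_a})^*$. So the first step is to put $L$ into a normal form: since the class is a positive boolean algebra, $L$ is a finite union of finite intersections of the generators $F(a,k)$, $F(a,k,n)$, $\Gamma^+$, $\Gamma^*$. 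I would absorb the $\Gamma^*$ and $\Gamma^+$ factors by observing that $\Gamma^* = \bigshuffle_{a\in\Gamma}(a)^*$ restricts the allowed letters, which is compatible with the shuffle form (letters outside $\Gamma$ get $k_a$ and $p_a$ irrelevant because they are simply absent); more carefully, one writes each intersection-term, after restricting to a subalphabet $\Gamma$, in the diagonal periodic form supplied by Lemma~\ref{lem:diagonal_periodic_other_form}. Hence $L$ becomes a \emph{finite union of diagonal periodic languages}, which is exactly the second description promised in the paragraph preceding Definition~\ref{def:diagonal_periodic_lang}.

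\textbf{Second}, with $L=\bigcup_{j=1}^N D_j$ where each $D_j$ is diagonal periodic, I would apply the algebraic identities of Theorem~\ref{thm:shuffle_properties}. The crucial one is part~(5), $(U\cup V)^{\shuffle,*}=U^{\shuffle,*}\shuffle V^{\shuffle,*}$, which iterates to give
\[
 L^{\shuffle,*}=\Bigl(\bigcup_{j=1}^N D_j\Bigr)^{\shuffle,*}=\bigshuffle_{j=1}^N D_j^{\shuffle,*}.
\]
Each $D_j^{\shuffle,*}$ is, by Proposition~\ref{prop:diagonal_periodic_it_shuffle_reg}, a finite union of diagonal periodic languages. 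Then I would use distributivity of shuffle over union (Theorem~\ref{thm:shuffle_properties}(3)) to expand the outer shuffle of these finite unions into a finite union of shuffles of diagonal periodic languages.

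\textbf{Third}, I must check that a shuffle of diagonal periodic languages is again (a finite union of) diagonal periodic languages, so that the whole expression stays inside the class. This is the step that needs genuine care, and I expect it to be \textbf{the main obstacle}. A single diagonal periodic language factors over letters as $\bigshuffle_{a\in\Gamma}a^{k_a}(a^{p_a})^*$, and the shuffle of two such languages factors letter-by-letter because shuffle distributes over the letter-components and, on a unary alphabet, shuffle coincides with concatenation. For a fixed letter $a$ the relevant computation is a shuffle (= product) $a^{k}(a^{p})^*\shuffle a^{k'}(a^{p'})^* = a^{k+k'}(a^{\gcd\text{-period}})^*$-type set, whose regular structure on $a^*$ is an eventually periodic subset of $\mathbb N_0$ and can be rewritten as a finite union of single residue-progressions $a^{k''}(a^{p''})^*$. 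Recombining these per-letter finite unions via distributivity yields a finite union of diagonal periodic languages, closing the argument. Finally, since each diagonal periodic language lies in the positive boolean algebra (being, by Lemma~\ref{lem:diagonal_periodic_other_form}, an intersection of the generating sets), the iterated shuffle $L^{\shuffle,*}$ lies in the positive boolean algebra as claimed; regularity follows from Proposition~\ref{prop:diag_periodic_regular}.
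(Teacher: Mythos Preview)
Your proposal is correct and follows essentially the same route as the paper's proof sketch: normalize $L$ as a finite union of diagonal periodic languages via distributivity and Lemma~\ref{lem:diagonal_periodic_other_form}, apply the identity $(U\cup V)^{\shuffle,*}=U^{\shuffle,*}\shuffle V^{\shuffle,*}$ to reduce to a shuffle of the $D_j^{\shuffle,*}$, invoke Proposition~\ref{prop:diagonal_periodic_it_shuffle_reg} on each factor, and conclude by closure of (finite unions of) diagonal periodic languages under binary shuffle. The paper handles your ``main obstacle'' with the same letter-by-letter argument (stated more tersely), and your final remark that diagonal periodic languages lie in the positive boolean algebra via the reverse reading of Lemma~\ref{lem:diagonal_periodic_other_form} matches how the paper closes the loop.
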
 
\begin{proofsketch}
 As intersection distributes over union,
 $L$ could be written as an intersection over the generating languages.
 Now, \[ 
 F(a, k_1) \cap F(a, k_2) = F(a, \max\{k_1, k_2\})
 \]
 and, by the generalized Chinese Remainder Theorem, Theorem~\ref{thm:CRT},
 every intersection $\bigcap_{i=1}^m F(a, r_i, n_i)$
 is either the empty set, or also a set of the form $F(a, r, n)$.
 So, every such intersection could be written
 in the form
 \[
  \left( \bigcap_{a \in \Sigma_1} F(a, t_a) \right) \cap 
  \left( \bigcap_{a \in \Sigma_2} F(a, r_a, n_a) \right) \cap L
 \] 
 where $L \in \{\Gamma^+, \Gamma^*\}$
 for some $\Gamma \subseteq \Sigma$ and $\Sigma_1, \Sigma_2 \subseteq \Sigma$.
 By Lemma~\ref{lem:diagonal_periodic_other_form}, these language
 are diagonal periodic over $\Gamma$.
 By Theorem~\ref{thm:shuffle_properties}, the iterated shuffle
 of $L$ is a finite shuffle product of iterated shuffles of these languages,
 which are regular by Proposition~\ref{prop:diagonal_periodic_it_shuffle_reg}.
 Hence, they are a finite shuffle product of regular languages and
 as the binary shuffle product is a regularity-preserving operation~\cite{Ito04}, 
 the language $L$ is regular. More precisely, as the iterated shuffles are finite unions
 of diagonal periodic languages, the result could
 be written as a finite union of diagonal periodic languages, which, by Lemma~\ref{lem:diagonal_periodic_other_form},
 are contained in this class.~\qed
\end{proofsketch}

The method of proof of Theorem~\ref{thm:it_shuffle_reg_preserving} also gives
the next result.

\begin{proposition}
\label{prop:form_class}
 The positive boolean algebra generated by languages
 of the form $F(a,k)$, $F(a,k,n)$, $0 \le k < n$, $\Gamma^+$
 and $\Gamma^*$, $\Gamma \subseteq \Sigma$,
 is precisely the language class of finite unions
 of the diagonal periodic languages.
\end{proposition}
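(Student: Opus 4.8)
The plan is to prove the stated set equality by two inclusions, reusing the normalization performed in the proof of Theorem~\ref{thm:it_shuffle_reg_preserving} together with Lemma~\ref{lem:diagonal_periodic_other_form}. For the inclusion of the finite unions of diagonal periodic languages into the positive boolean algebra, it suffices, as the algebra is closed under union, to place a single diagonal periodic language. The language $\{\varepsilon\}$ equals $\emptyset^{*}$ and is therefore a generator. For a diagonal periodic language $\bigshuffle_{a\in\Gamma} a^{k_a}(a^{p_a})^{*}$ with $\Gamma\ne\emptyset$, I read Lemma~\ref{lem:diagonal_periodic_other_form} backwards: the unary factor for $a$ encodes the two conditions $|u|_a\ge k_a$ and $|u|_a\equiv k_a\pmod{p_a}$, captured by $F(a,k_a)$ and $F(a,k_a\bmod p_a,p_a)$ respectively (one of the two factors being dropped in the degenerate cases $p_a=1$ or $k_a=0$), while the confinement to the subalphabet $\Gamma$ is supplied by intersecting with the generator $\Gamma^{*}$. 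Thus the language is a finite intersection of generators, hence lies in the positive boolean algebra.

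For the reverse inclusion I proceed exactly as in the proof of Theorem~\ref{thm:it_shuffle_reg_preserving}. Distributing intersection over union expresses any member of the algebra as a finite union of intersections of generators. Within one such intersection I first merge, for each fixed letter $a$, the factors $F(a,k)$ using $F(a,k_1)\cap F(a,k_2)=F(a,\max\{k_1,k_2\})$, merge the factors $F(a,r,n)$ using the generalized Chinese Remainder Theorem (Theorem~\ref{thm:CRT}) into either $\emptyset$ or a single $F(a,r,n)$, and merge the $\Gamma^{*}$- and $\Gamma^{+}$-factors into one set $\Gamma_0^{*}$ or $\Gamma_0^{+}$, where $\Gamma_0$ is the intersection of the participating subalphabets and the exponent is $+$ precisely when some factor was $\Gamma^{+}$. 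This yields the normal form $\bigcap_{a\in\Sigma_1}F(a,t_a)\cap\bigcap_{a\in\Sigma_2}F(a,r_a,n_a)\cap L$ with $L\in\{\Gamma_0^{*},\Gamma_0^{+}\}$. By Lemma~\ref{lem:diagonal_periodic_other_form} the $F$-part alone is a diagonal periodic language $D=\bigshuffle_{a\in\Sigma}a^{k_a}(a^{p_a})^{*}$ over $\Sigma$. Intersecting $D$ with $\Gamma_0^{*}$ is empty unless $k_b=0$ for every $b\notin\Gamma_0$, in which case it equals the diagonal periodic language over $\Gamma_0$ obtained by discarding those free letters. Finally, replacing $\Gamma_0^{*}$ by $\Gamma_0^{+}$ only removes $\varepsilon$; when $\varepsilon$ was present I rewrite the result as the finite union, over the choice of a letter $b\in\Gamma_0$ forced to occur, of the diagonal periodic languages whose $b$-component is raised to its least positive admissible exponent, again a finite union of diagonal periodic languages. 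Taking the union over all distributed intersections then gives the claimed finite union.

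The main obstacle I anticipate is the bookkeeping around the empty word and the subalphabet. One must verify that intersecting the $\Sigma$-level language $D$ with $\Gamma_0^{*}$ really produces a diagonal periodic language over $\Gamma_0$, and is empty exactly when a forbidden letter is forced to appear; and that passing from $\Gamma_0^{*}$ to $\Gamma_0^{+}$ — the single point where the family of individual diagonal periodic languages fails to be literally closed — is absorbed by permitting a finite union. All remaining manipulations are the same normalization already carried out for Theorem~\ref{thm:it_shuffle_reg_preserving}, so I would not redo them in detail.
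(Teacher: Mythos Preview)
Your proposal is correct and follows essentially the same route as the paper, which simply remarks that ``the method of proof of Theorem~\ref{thm:it_shuffle_reg_preserving} also gives the next result.'' You in fact spell out two points the paper leaves implicit: (i) the reverse inclusion, i.e.\ that each diagonal periodic language over $\Gamma$ is an intersection of $\Gamma^{*}$ with suitable $F(a,k)$ and $F(a,r,n)$; and (ii) the passage from the $\Sigma$-level language supplied by Lemma~\ref{lem:diagonal_periodic_other_form} to a diagonal periodic language over the subalphabet $\Gamma_0$, including the $\Gamma_0^{+}$ case where $\varepsilon$ must be removed by a finite union.
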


\begin{corollary}
 The iterated shuffle of a commutative group language
 is regular. 
\end{corollary}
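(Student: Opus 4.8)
The plan is to observe that this corollary falls out immediately from the two structural results already established, namely Theorem~\ref{thm:group_union_Fak} and Theorem~\ref{thm:it_shuffle_reg_preserving}; there is essentially no new work to do beyond linking them. First I would take an arbitrary commutative group language $L \subseteq \Sigma^*$. By Theorem~\ref{thm:group_union_Fak}, $L$ can be written as a finite union of languages of the form $\bigcap_{i=1}^m F(a_i, k_i, n_i)$ with $a_i \in \Sigma$ and $0 \le k_i < n_i$.

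Next I would point out that such a representation uses only the positive boolean operations (finite union and finite intersection) applied to generators of the form $F(a,k,n)$. Consequently $L$ lies in the positive boolean algebra generated by the languages $F(a,k)$, $F(a,k,n)$, $\Gamma^+$ and $\Gamma^*$ for $\Gamma \subseteq \Sigma$; in fact already the subfamily of $F(a,k,n)$ generators suffices here. This is precisely the class to which Theorem~\ref{thm:it_shuffle_reg_preserving} applies, so applying that theorem directly yields that $L^{\itshuffle}$ is again contained in this positive boolean algebra, and in particular is regular.

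The only point that requires care, rather than being a genuine obstacle, is that the characterization furnished by Theorem~\ref{thm:group_union_Fak} must produce a \emph{positive} boolean combination, with no complements, so that $L$ indeed stays inside the positive boolean algebra of Theorem~\ref{thm:it_shuffle_reg_preserving}. This is exactly what Theorem~\ref{thm:group_union_Fak} guarantees, since the complements were already eliminated there by means of the identity $\overline{F(a,k,n)} = \bigcup_{i \in \{0,\ldots,n-1\}\setminus\{k\}} F(a,i,n)$. Hence no further computation is needed, and the corollary follows by chaining the two cited results.
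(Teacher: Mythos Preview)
Your proposal is correct and follows essentially the same approach as the paper: invoke Theorem~\ref{thm:group_union_Fak} to place any commutative group language inside the positive boolean algebra of Theorem~\ref{thm:it_shuffle_reg_preserving}, and then apply the latter. The paper's proof is just a one-line version of what you wrote, and your extra remark about why the representation is complement-free is a helpful clarification.
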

\begin{proof} By Theorem~\ref{thm:group_union_Fak},
the class introduced in Theorem~\ref{thm:it_shuffle_reg_preserving}
contains the group languages.~\qed \end{proof}

\begin{corollaryrep} 
 The variety $\textbf{Com}^+$
 is closed under iterated shuffle.
\end{corollaryrep}
\begin{proof}
By Definition~\ref{def:com_plus},
the class introduced in Theorem~\ref{thm:it_shuffle_reg_preserving}
contains $\mathbf{Com}^+(\Sigma^*)$ for any alphabet 
$\Sigma$. Furthermore, by the method of proof of Theorem~\ref{thm:it_shuffle_reg_preserving} and as the iterated
shuffle does not introduce new letters, and does not remove old letters, we do not leave
the class $\mathbf{Com}^+(\Sigma^*)$.~\qed 
\end{proof} 

Also, as, for $U_a, V_a \subseteq \{a\}^*$, $a \in \Sigma$, we have $(\bigshuffle_{a \in \Sigma} U_a) \shuffle (\bigshuffle_{a \in \Sigma} V_a)
= (\bigshuffle_{a \in \Sigma} (U_a \cdot V_a))$,
and with Theorem~\ref{thm:shuffle_properties},
we can deduce, by Proposition~\ref{prop:form_class},
the next result. This extends an old result by J.F. Perrot~\cite{DBLP:journals/tcs/Perrot78}
stating that the star-free commutative language are closed under binary shuffle.

\begin{proposition}
 The positive boolean algebra generated by the languages
 $F(a,k)$, $F(a,k,n)$, $0 \le k < n$, $\Gamma^+$
 and $\Gamma^*$ for $\Gamma \subseteq \Sigma$
 is closed under binary shuffle.
\end{proposition}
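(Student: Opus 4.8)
The plan is to run the argument through the characterization of Proposition~\ref{prop:form_class}: the positive boolean algebra in question is exactly the class of finite unions of diagonal periodic languages. So it suffices to show that this class is closed under binary shuffle. Since the shuffle distributes over union (Theorem~\ref{thm:shuffle_properties}), I would immediately reduce to the case of two single diagonal periodic languages $L_1, L_2$ and show that $L_1 \shuffle L_2$ is again a finite union of diagonal periodic languages.

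First I would write each diagonal periodic language as a full shuffle product indexed by $\Sigma$: for $L_1$ over $\Gamma_1$ set $U_a = a^{k_a}(a^{p_a})^*$ for $a \in \Gamma_1$ and $U_a = \{\varepsilon\}$ for $a \notin \Gamma_1$, so that $L_1 = \bigshuffle_{a \in \Sigma} U_a$; define $V_a$ analogously for $L_2$. Then I would apply the per-letter factorization identity stated just before the proposition, $(\bigshuffle_{a\in\Sigma} U_a) \shuffle (\bigshuffle_{a\in\Sigma} V_a) = \bigshuffle_{a\in\Sigma} (U_a \cdot V_a)$, which reduces everything to understanding the unary products $U_a \cdot V_a \subseteq \{a\}^*$.

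The crux — and the main obstacle — is the unary statement that $a^{k}(a^{p})^* \cdot a^{k'}(a^{q})^*$ is a finite union of diagonal periodic languages over $\{a\}$. Reading off exponents, this language is $a^{\,k+k'+\langle p,q\rangle}$, where $\langle p,q\rangle = \{ ip + jq : i,j \ge 0 \}$. The key observation is elementary: $\langle p,q\rangle$ is closed under adding $p$, and there are only finitely many residues modulo $p$; hence, letting $m_\rho$ denote the least element of $\langle p,q\rangle$ in each residue class $\rho \bmod p$ that is actually met, one obtains $\langle p,q\rangle = \bigcup_\rho \{ m_\rho + ip : i \ge 0 \}$, a finite union of arithmetic progressions of common difference $p$. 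The naive attempt to use the period $\gcd(p,q)$ fails because of the sporadic small elements of the semigroup (for instance $3 \in \langle 3,5\rangle$ while $4 \notin \langle 3,5\rangle$); taking the period to be $p$ repairs this. Shifting by $k+k'$ then exhibits $U_a \cdot V_a$ as a finite union of languages $a^{c}(a^{p})^*$, each diagonal periodic over $\{a\}$, with the degenerate cases where $U_a$ or $V_a$ equals $\{\varepsilon\}$ handled directly.

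Finally I would reassemble the product. Each $U_a \cdot V_a$ is now a finite union of factors of the form $a^{c}(a^{p})^*$ or $\{\varepsilon\}$, so one more application of distributivity of shuffle over union expands $\bigshuffle_{a\in\Sigma}(U_a \cdot V_a)$ into a finite union of terms $\bigshuffle_{a\in\Sigma} W_a$ in which each $W_a$ is a single such factor. By Definition~\ref{def:diagonal_periodic_lang}, every such term is a diagonal periodic language over the subalphabet of letters whose factor is not $\{\varepsilon\}$. Hence $L_1 \shuffle L_2$ is a finite union of diagonal periodic languages, and by Proposition~\ref{prop:form_class} it lies back in the positive boolean algebra, which completes the argument.
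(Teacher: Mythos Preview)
Your proposal is correct and follows essentially the same approach the paper sketches in the sentence preceding the proposition: reduce via Proposition~\ref{prop:form_class} to diagonal periodic languages, use distributivity of shuffle over union, and apply the per-letter identity $(\bigshuffle_{a\in\Sigma} U_a)\shuffle(\bigshuffle_{a\in\Sigma} V_a)=\bigshuffle_{a\in\Sigma}(U_a\cdot V_a)$. The paper leaves the unary step---that $a^{k}(a^{p})^{*}\cdot a^{k'}(a^{q})^{*}$ is a finite union of languages $a^{c}(a^{p})^{*}$---entirely implicit, whereas you spell it out correctly via the residue-class decomposition of the numerical semigroup $\{ip+jq:i,j\ge 0\}$; this is exactly the missing detail needed to land back in the class of finite unions of diagonal periodic languages.
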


\section{Characterizing Regularity of the Iterated Shuffle}

First, in Subsection~\ref{subsec:fin_lang}, we will give a necessary and sufficient
condition when the iterated shuffle of a commutative finite language is regular.
Then, in Subsection~\ref{subsec:aperiodic_group}, we will present partial results for aperiodic commutative language.
Lastly, in Subsection~\ref{subsec:dec_procedures}, we discuss
decision procedures related to regularity, the commutative closure and the iterated shuffle.

\subsection{Finite Commutative Languages}
\label{subsec:fin_lang}

Here, we investigate finite commutative languages.


\begin{theoremrep}
\label{thm:finite_lang_it_shuffle}
Let $L \subseteq \Sigma^*$ be a finite language. Then, 
$\perm(L)^{\itshuffle}$
is regular if and only if for any $a \in \Sigma$ with $\Sigma^*a\Sigma^* \cap L \ne \emptyset$ we
have $a^+ \cap L \ne \emptyset$.
\end{theoremrep} 
\begin{proof}

 Let $L = \{u_1, \ldots, u_n\}$. Then, by Theorem~\ref{thm:shuffle_properties} and Theorem~\ref{thm:perm_semiring_hom},
 \begin{equation}\label{eqn:fin_it_shuffle} 
  \perm(L)^{\itshuffle} = \perm(u_1^*) \shuffle \ldots \shuffle \perm(u_n^*).
 \end{equation}

 Suppose that for $a \in \Sigma$, we find $u \in L$ with $|u|_a > 0$ but $a^+ \cap L = \emptyset$.
 Let $b \in \Sigma \setminus \{a\}$ be such that $|u|_b > 0$.
 Set $P = \pi_{\{a,b\}}(L)$, $m = \max\left\{ \frac{|u|_a}{|u|_b} \mid v \in P \right\} \in \mathbb Q$
 and choose $w \in P$ with $|w|_b m = |w|_a$.
 Let $0 < s < t$, then, 
 as $\frac{tm}{s} > m$ and by the maximal choice of $m$,
 \[
  a^{s|w|_a} b^{|w|_b s} \in P^{\itshuffle} \mbox{ and } 
  a^{t|w|_a} b^{|w|_b s} \notin P^{\itshuffle}.
 \]
 So, the words $a^{s|w|_a}$ and $a^{t|w|_a}$ are not equivalent for the Nerode right-congruence.
 Hence, the infinitely many words $\{ a^{|w|_a r} \mid r > 0 \}$
 are pairwise non-equivalent for the Nerode right-congruence of $P^{\itshuffle}$
 and so $P^{\itshuffle}$ has infinitely many distinct Nerode right-congruence classes
 and so is not regular. As $P^{\itshuffle} = \pi_{\{a,b\}}(L)^{\itshuffle} = \pi_{\{a,b\}}(L^{\itshuffle})$, 
 we find that $L^{\itshuffle}$ is not regular.

 Now, suppose the condition is true. 
 By Equation~\eqref{eqn:fin_it_shuffle}, we have
 \begin{equation}\label{eqn:fin_lang_lin_comb}
  v \in \psi(\perm(L)^{\itshuffle} )
  \Leftrightarrow 
  \exists c_1, \ldots, c_n \in \mathbb N_0 : v = c_1 \psi(u_1) + \ldots + c_n \psi(u_n).
 \end{equation}
 Next, we select $k$ unary words from $\{u_1, \ldots, u_n\}$
 such that for every letter used in $L$ we have exactly one such non-empty word over this letter
 in this set of selected words. We assume these to be the first $k$ words
 among $u_1, \ldots, u_n$. More precisely, without loss of generality, let $1 \le k \le n$ be such that 
 for any $u_i$, $i \in \{1,\ldots,k\}$, we have $u_i \in a^+$ for some $a \in \Sigma$
 and for any $a \in \Sigma$ with $\Sigma^* a \Sigma^* \cap L \ne \emptyset$
 we have $|\{ u_1, \ldots, u_k \} \cap a^+| = 1$.
 Then, for any $i \in \{1,\ldots,k\}$, 
 we can write $\psi(u_i) = m_i \cdot \psi(a)$, where $u_i \in a^+$ and $m_i > 0$.
 Also, denote by $a_i \in \Sigma$ the letter such that $\psi(u_i) = m_i \cdot \psi(a_i)$.
 Then, for $i,j \in \{1,\ldots, k\}$, by the assumptions, $u_i \ne u_j$ implies $a_i \ne a_j$
 and $L \subseteq \{a_1, \ldots, a_k\}^*$.
 If, for $i \in \{k+1, \ldots, n\}$, 
 we have $c_i \ge m_1 \cdots m_k$ in Equation~\eqref{eqn:fin_lang_lin_comb}, then,
 if we select number  $x_a \ge 0$, $a \in \Sigma$, such that $\psi(u_i) = \sum_{a \in \Sigma} x_a \psi(a) = \sum_{j=1}^k x_{a_j} \psi(a_j)$,
 as
 \begin{align*}
    &  x_{a_1} \frac{m_1 \cdots m_k}{m_1} \psi(u_1) + \ldots +  x_{a_k} \frac{m_1 \cdots m_k}{m_k} \psi(u_k) \\
      & =  x_{a_1} \frac{m_1 \cdots m_k}{m_1} m_1 \psi(a_1) + \ldots +  x_{a_k} \frac{m_1 \cdots m_k}{m_k} m_k \psi(a_k) \\
      & =  m_1 \cdots m_k \psi(u_i) 
     \end{align*}
 we have
 \begin{multline*}
   \left( c_1 + x_{a_1} \frac{m_1 \cdots m_k}{m_1} \right) \psi(u_1) + \ldots + \left( c_k + x_{a_k} \frac{m_1 \cdots m_k}{m_k} \right) \psi(u_k) + \ldots + \\ c_{i-1} \psi(u_{i-1}) + (c_i - m_1 \cdots m_k) \psi(u_i) + c_{i+1}\psi(u_{i+1}) + \ldots + c_n \psi(u_n).
 \end{multline*}

 Hence, we can choose the coefficients in Equation~\eqref{eqn:fin_lang_lin_comb} 
 such that, for any $i \in \{k + 1, \ldots, n\}$,
 \[ 
  c_i < m_1 \cdots m_k.
 \]
 As, by Theorem~\ref{thm::comm_lang_closure}, $\perm(L)^{\itshuffle}$ is commutative, we have, for $w \in \Sigma^*$,
 \[ 
  w \in \perm(L)^{\itshuffle} \Leftrightarrow \psi(w) \in \psi(\perm(L)^{\itshuffle}
 \]
 and, for $c_1, \ldots, c_n \ge 0$,
 \[
  \psi(w) =  c_1 \psi(u_1) + \ldots + c_n \psi(u_n)
  \Leftrightarrow 
  w \in \perm(u_1^{c_1}) \shuffle \ldots \shuffle \perm(u_n^{c_n}).
 \]
 By the previous reasoning, we conclude
 \begin{multline*}
   \perm(u_1^*) \shuffle \ldots \shuffle \perm(u_n^*) = \\
   \bigcup_{\substack{(c_{k+1}, \ldots, c_n) \\ 0 \le c_i < m_1 \cdots m_k}} \perm(u_1^*) \shuffle \ldots \shuffle \perm(u_k^*)
  \shuffle \perm(u_{k+1}^{c_{k+1}}) \shuffle \ldots \shuffle \perm(u_n^{c_n}).
 \end{multline*}
 As, for $i \in \{1,\ldots,k\}$, we have $u_i \in a_i^+$, $\perm(u_i^*) = u_i^*$.
 So, as the binary shuffle operation is regularity-preserving~\cite{Ito04},
 every part of the union is regular and as the union is finite
 we find that $\perm(L)^{\itshuffle}$
 is regular.~\qed
\end{proof} 

By the next corollary, we find that we can characterize regularity
of expressions, for instance,
of the form
\begin{align*}
    \perm(u_1^{+}) \shuffle \ldots \shuffle \perm(u_n^+) & =  \perm(u_1\cdots u_n) \shuffle \perm(u_1^*) \shuffle \perm(u_n^*) \\
    & = \perm(u_1\cdots u_n) \shuffle \perm(\{u_1, \ldots, u_n\})^{\itshuffle} 
\end{align*}
with Theorem~\ref{thm:finite_lang_it_shuffle}, where the above equalities
are implied by Theorem~\ref{thm:shuffle_properties} and Theorem~\ref{thm:perm_semiring_hom}.

\begin{corollaryrep}\label{cor:finite_lang_it_shuffle}
 Let $u \in \Sigma$ and $L \subseteq \Sigma^*$ be a finite language.
 Then, $\perm(u) \shuffle \perm(L)^{\itshuffle}$
 is regular if and only if for any $a \in \Sigma$
 with $\Sigma^* a \Sigma^* \cap L \ne \emptyset$,
 we have $a^+ \cap L \ne \emptyset$.
\end{corollaryrep}
\begin{proof}
 If $U \subseteq \Sigma^*$ is any commutative language and $u \in \Sigma^*$,
 then $\perm(u) \shuffle U$ is regular if and only 
 if $U$ is regular. One implication is clear as the binary shuffle operation
 is regularity-preserving~\cite{Ito04}.
 
 For the other implication, first note that $U \subseteq u^{-1}(\perm(u) \shuffle U)$.
 Now we argue that $U \supseteq u^{-1}(\perm(u) \shuffle U)$ holds true.
 If $x \in \Sigma^*$ is such that $ux \in \perm(u)\shuffle U$,
 then there exists $y \in U$ such that $ux \in \perm(uy)$.
 This implies $x \in \perm(y)$ and so, as $U$ is commutative, $x \in U$.
 Hence, we find
 \[ 
  U = u^{-1} (\perm(u) \shuffle U)
 \]
 and as the quotient by a word is a regularity-preserving operation,
 the other implication follows.~\qed
\end{proof}

\subsection{Aperiodic Commutative Languages}
\label{subsec:aperiodic_lang}

Here, we investigate aperiodic commutative languages.

\begin{propositionrep}\label{prop:aperiodic_lang_form}
 Every aperiodic commutative language
 could be written as a finite union of languages
 of the form
 $
  perm(u) \shuffle \Gamma^*
 $
 for $u \in \Sigma^*$ and $\Gamma \subseteq \Sigma$.
\end{propositionrep} 
\begin{proof} 

 For the sets from Section~\ref{subsec:proj_star_free_pure_group} we have, with $a \in \Sigma$
 and $r_1, s_1, r_2, s_2 \ge 0$,
 \[
  I(a, r_1, s_1) \cap I(a, r_2, s_2) = I(a, \max\{r_1, r_2\}, \min\{s_1,s_2\}). 
 \]
 Hence, we can suppose in the intersections from Theorem~\ref{thm:aperiodic_union_Fakk} that
 all letters are different.
 Then, with $a_i \ne a_j$ for $i,j \in \{1,\ldots,n\}$ distinct, we have
 \[ 
  \bigcap_{i=1}^n I(a_i, r_i, s_i) = \bigshuffle_{\substack{i \in \{1,\ldots,n\} \\ s_i = \infty}} a_i^{r_i}
   \shuffle \bigshuffle_{\substack{i \in \{1,\ldots,n\} \\ s_i < \infty}} \{ a_i^{r_i}, a_i^{r_i+1}, \ldots, a_i^{s_i} \}
   \shuffle \Gamma^*
 \]
 with $\Gamma = \Sigma \setminus \{ a_1, \ldots, a_n \} \cup \{ a_i : \exists i \in \{1,\ldots,n\} : s_i = \infty \}$.
 As, for $u \in \Sigma^*$, we have $\perm(u) = \bigshuffle_{a \in \Sigma} a^{|u|_a}$
 and, by Theorem~\ref{thm:shuffle_properties}, the shuffle operation distributes over union,
 we can write the above set as a finite union
 of sets of the form $\perm(u) \shuffle \Gamma^*$.
 Then, Theorem~\ref{thm:aperiodic_union_Fakk} gives the claim.~\qed 
\end{proof}

\begin{remark}\label{rem:comm_aperiodic_aut}
 By a result from~\cite[Page 9]{Ito04},
 it follows that a letter which permutes 
 with every other letter has to permute the states
 of every strongly connected component. 
 This could be used to prove that the minimal automaton of an aperiodic commutative
 language cannot have non-trivial loops, i.e.,
 every loop must be a self-loop, which could also be used to give
 a proof of Proposition~\ref{prop:aperiodic_lang_form}.
\end{remark}

With Theorem~\ref{thm:finite_lang_it_shuffle}
we get the next result.

\begin{propositionrep}
 Let $u \in \Sigma^*$ and $\Gamma \subseteq \Sigma$.
 The iterated shuffle of $\perm(u) \shuffle \Gamma^*$
 is regular if and only if there exists $a \in \Sigma$ such that 
 $u \subseteq a^+$ or when $u \in \Gamma^*$.
\end{propositionrep}
\begin{proof}
 By Theorem~\ref{thm:shuffle_properties} and Theorem~\ref{thm:perm_semiring_hom},
 \begin{multline*}
  (\perm(u) \shuffle \Gamma^*)^{\itshuffle} \\
   = \{\varepsilon\} \cup  \perm(u^+) \shuffle \bigshuffle_{a \in \Gamma} a^* 
   = \{\varepsilon\} \cup \perm(u) \shuffle \perm( \{ u \} \cup \Gamma )^{\itshuffle}.
 \end{multline*}
 Then, apply Corollary~\ref{cor:finite_lang_it_shuffle}
 and the simple fact that a language $L \subseteq \Sigma^*$ is regular
 if and only if $L \cup \{\varepsilon\}$ is regular.~\qed 
\end{proof}

Next, we give a simple sufficient criterion of regularity for a binary alphabet.

\begin{lemmarep}
 Let $\Sigma = \{a,b\}$ and $L \subseteq \Sigma^*$
 be regular. Then, if there exists $u \in \Sigma^*$
 such that $\perm(u) \shuffle \Sigma^* \subseteq \perm(L)$,
 then $\perm(L)$ is regular.
\end{lemmarep}
\begin{proof}
 Let $u \in \Sigma^*$ such that $\perm(u)\shuffle \Sigma^* \subseteq \perm(L)$.
 Note that $\perm(u) \shuffle \Sigma^*$ is regular.
 It is well-known~\cite{FerPSV2017} 
 that, if $L \subseteq \Sigma^*$ is regular, then $\psi(\perm(L)) = \psi(L)$
 is a finite union of linear sets, i.e., sets of the
 form
 \[
  \left\{ v_0 + \sum_{i=1}^n k_i v_i \mid \{ k_1, \ldots, k_n \} \subseteq \mathbb N_0 \right\}
 \]
 for vectors $\{ v_0, v_1, \ldots, v_n \} \subseteq \mathbb N_0^{|\Sigma|}$. 
 Hence, we show that
 if $A \subseteq \mathbb N_0^{|\Sigma|}$
 is a linear subset, then $\psi^{-1}(A) \cup \perm(u) \shuffle \Sigma^*$
 is regular, which, inductively, gives our claim.
 Write
 \[
  A = \left\{ v_0 + \sum_{i=1}^n k_i v_i \mid \{ k_1, \ldots, k_n \} \subseteq \mathbb N_0 \right\}
 \]
 for vectors $\{ v_0, v_1, \ldots, v_n \} \subseteq \mathbb N_0^{|\Sigma|}$. Without loss of generality,
 we suppose none of the vectors $v_1, \ldots, v_n$ is zero, the vectors 
 $v_1, \ldots, v_m$ are axis-parallel, i.e., exactly one entry is non-zero,
 and the vectors $v_{m+1}, \ldots, v_n$ are sloped, i.e., we have at least two non-zero entries.
 Choosing words $u_i \in \Sigma^*$, $i \in \{0,\ldots,n\}$, with $\psi(u_i) = v_i$,
 we have
 \begin{equation}\label{eqn:form_lang}
  \psi^{-1}(A) = \perm(u_0) \shuffle \bigshuffle_{i=1}^n \perm(u_i^*).
 \end{equation}
 If $m = n$, then all the vectors are axis-parallel.
 Then, the words $u_1,\ldots,u_n$ are unary and if we write $u_i \in a_i^*$
 in this case, we find $\perm(u_i^*) = u_i^*$ and 
 $\psi^{-1}(A)$ is regular. Hence, $\psi^{-1}(A) \cup \perm(u) \shuffle \Sigma^*$
 is regular.
 So, suppose $m < n$.
 As $\Sigma = \{a,b\}$, for any $i \in \{m+1,\ldots,n\}$,
 in the the vector $v_i$ all entries are non-zero.
 Hence, for the fixed $u \in \Sigma^*$ chosen above, we can choose numbers $N_i \ge 0$ such that
 \[
  \psi(u) \le v_0 + N_i v_i.
 \]
 So, if $v = v_0 + k_1 v_1 + \ldots + k_n v_n$
 with $k_i \ge N_i$ for $i \in \{1,\ldots,n\}$,
 then $\psi^{-1}(v) \subseteq \perm(u) \shuffle \Sigma^*$
 and $\perm(u) \shuffle \Sigma^* \cup \psi^{-1}(A)$ equals
 \begin{multline*}
 \perm(u) \shuffle\Sigma^* \cup \\
   \psi^{-1}\left( \left\{ v_0 + \sum_{i=1}^n k_i v_i \mid \{ k_1, \ldots, k_n \} \subseteq \{0,\ldots, \max\{N_1, \ldots,N_n\} - 1 \} \right\} \right). 
 \end{multline*}
 Hence, it is a regular language.~\qed
\end{proof}

Lastly, a few examples of aperiodic commutative languages, some
of them yielding non-regular languages and some of them regular languages
when applying the iterated shuffle.

\begin{example} Let $\Sigma = \{a,b,c\}$.

\begin{enumerate}
\item The iterated shuffle of $\{ab,ba\} \cup \{c\} \shuffle \{a,b\}^*$ 
 is not regular.

\item The iterated shuffle of $\{ab,ba\} \shuffle \{c\}^* \cup \{ac\}\shuffle \{a,b\}^*$
 is not regular.
 
\item The iterated shuffle of $\{ ab,ba \} \cup \{c\}\shuffle \{a,b\}^* \cup \perm(abb) \shuffle \{a,b\}^*$
 is regular.
 
\item The iterated shuffle of $\{ab,ba\} \cup \{c\}\shuffle \{a,b\}^* \cup \perm(abb) \shuffle \{a\}^* \cup \{bb\}$
 is regular.
\end{enumerate}
\end{example}

\subsection{Decision Procedures} 
\label{subsec:dec_procedures}

In~\cite{DBLP:journals/tcs/Gohon85,GinsburgSpanier66} it was shown that for regular $L \subseteq \Sigma^*$,
it is decidable if $\perm(L)$ is regular. As $\perm(L)^{\itshuffle} = \perm(L^*)$, also
the regularity of the iterated shuffle on commutative regular languages is decidable. This result
was also shown directly, without citing~\cite{DBLP:journals/tcs/Gohon85,GinsburgSpanier66},
in~\cite{Ito04,DBLP:conf/dmtcs/ImrehIK96}.
However, the precise computational complexity was not clear, and by a statement
given in~\cite[Theorem 45]{FerPSV2017} it follows that for a regular language
given by a regular expression it is $\NP$-hard to decide if the commutative closure
is regular.
On the contrary, the conditions stated in Theorem~\ref{thm:finite_lang_it_shuffle} 
 could be tested in polynomial 
time for a finite commutative language given by a deterministic, a non-deterministic or a 
regular expression as input. 
This follows as non-emptiness of intersection 
with the fixed languages $\Sigma^*a\Sigma^*$ and $a^+$, $a \in \Sigma$, could be done in polynomial
time by the product automaton construction.


\section{Conclusion}

A general criterion as given for finite (commutative) languages
in Theorem~\ref{thm:finite_lang_it_shuffle}, which gives a polynomial time decision procedure,
for general commutative regular languages is an open problem.
For the subclass closed under iterated shuffle identified
in Subsection~\ref{subsec:it_shuffle}, a sharp bound for the size of a recognizing automaton of the iterated shuffle is unknown.

\smallskip \noindent \footnotesize
\textbf{Acknowledgement.} I thank the anonymous reviewers for careful reading, pointing out typos and unclear formulations
and providing additional references.

\bibliographystyle{splncs04}
\bibliography{ms}
\end{document}